\crefname{theorem}{Theorem}{Theorem}
\crefname{lemma}{Lemma}{Lemma}
\crefname{definition}{Definition}{Definition}
\crefname{figure}{Figure}{Figure}
\crefname{section}{Section}{Section}
\newtheorem{theorem}{Theorem}
\newtheorem{definition}[theorem]{Definition}
\newtheorem{lemma}[theorem]{Lemma}
\newtheorem{proposition}[theorem]{Proposition}
\crefname{equation}{}{}
\title{Approximability of the Four-Vertex Model}
\author{Zhiguo Fu\footnote{School of Information Science and Technology and KLAS, Northeast Normal University, Changchun, China. Email: \href{mailto:fuzg432@nenu.edu.cn}{fuzg432@nenu.edu.cn}}\qquad
Tianyu Liu\footnote{Stanford University, Stanford, CA, USA. Email: \href{mailto:tliu248@stanford.edu}{tliu248@stanford.edu}}\qquad
Xiongxin Yang \footnote{School of Information Science and Technology, Northeast Normal University, Changchun, China. Email: \href{mailto:yangxx500@nenu.edu.cn}{yangxx500@nenu.edu.cn}}}
\date{}
\begin{document}

\maketitle

\begin{abstract}
    We study the approximability of the four-vertex model, a special case of the six-vertex model.
    We prove that, despite being NP-hard to approximate in the worst case, the four-vertex model admits a fully polynomial randomized approximation scheme (FPRAS) when the input satisfies certain linear equation system over GF($2$).
    The FPRAS is given by a Markov chain known as the \emph{worm process}, whose state space and rapid mixing rely on the solution of the linear equation system. 
    This is the first attempt to design an FPRAS for the six-vertex model with \emph{unwindable} constraint functions.
    Additionally, we explore the applications of this technique on planar graphs, providing efficient sampling algorithms.
\end{abstract}

\section{Introduction}
\label{Introduction}

The \textit{six-vertex model} was first introduced by Linus Pauling \cite{Pauling/1935/Ice_type} in 1935 to describe the properties of ice. 
It is an abstraction of the crystal lattice with hydrogen bonds.
From a graph-theoretic perspective, the six-vertex model is defined on a 4-regular graph $G$ as follows.
Suppose the four incident edges of each vertex are labeled from $1$ to $4$.
The state space of the six-vertex model consist of all the Eulerian orientations of $G$, i.e., orientations where each vertex has exactly two arrows coming in and two arrows going out.
The ``two-in two-out'' rule is also called the ice-rule. 
There are $\binom{4}{2} = 6 $ permitted types of local configurations around a vertex hence the name six-vertex model (see \cref{six-type}).

\begin{figure}[ht]
    \begin{subfigure}[t]{0.15\textwidth}
        \centering
        \includegraphics[width=1\textwidth]{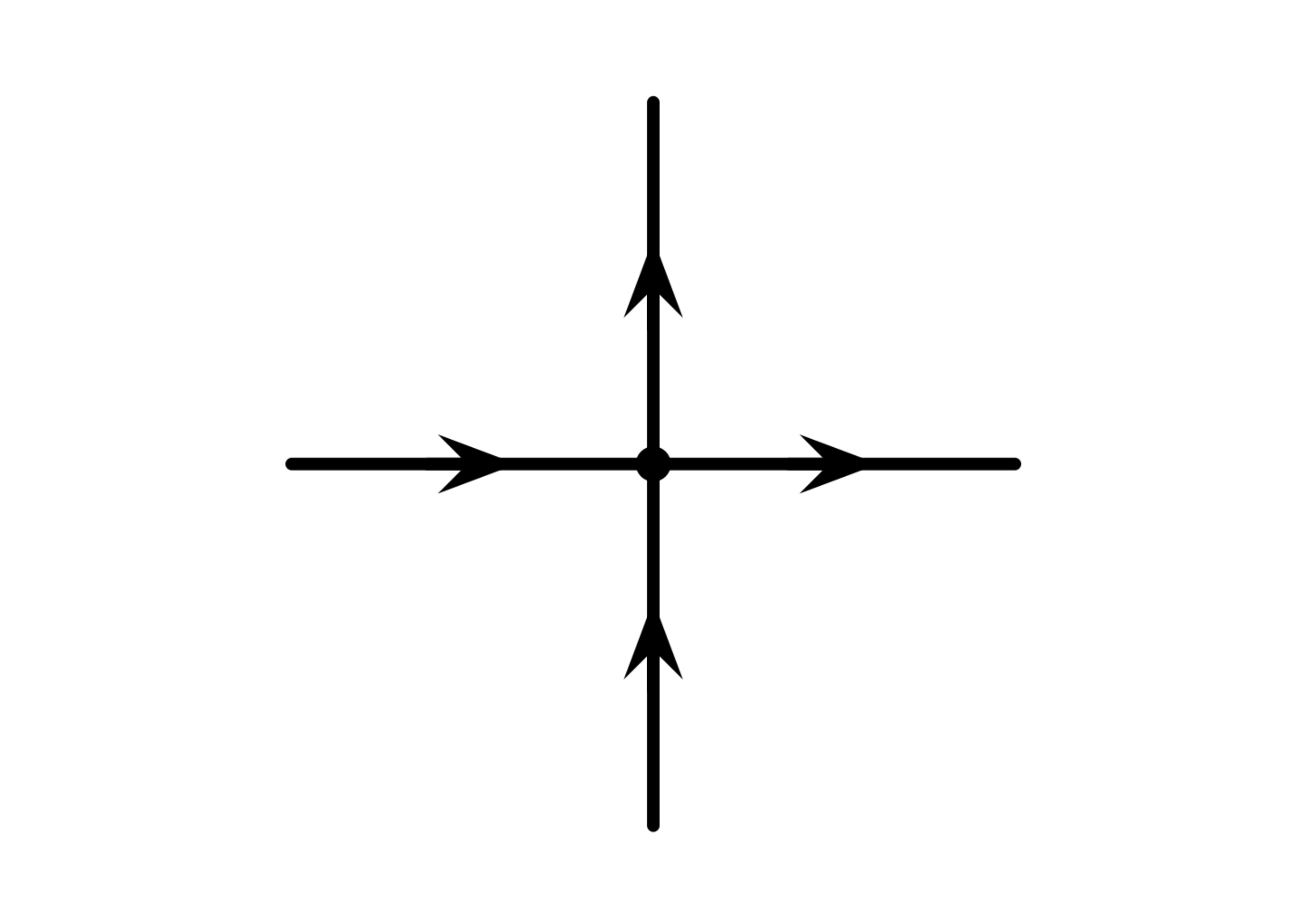}
        \subcaption*{1}
    \end{subfigure}
    \begin{subfigure}[t]{0.15\textwidth}
        \centering
        \includegraphics[width=1\textwidth]{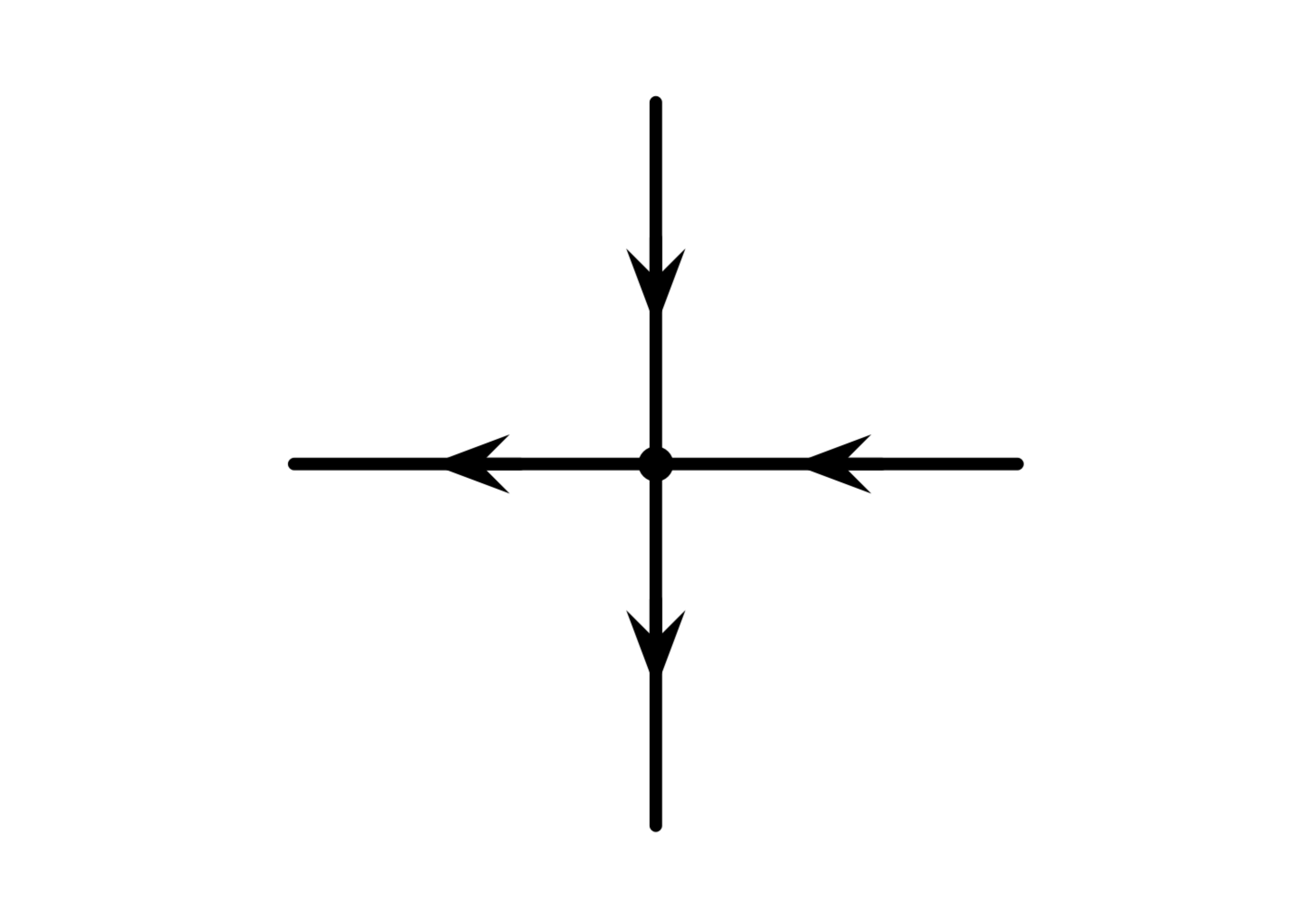}
        \subcaption*{2}
    \end{subfigure}
    \begin{subfigure}[t]{0.15\textwidth}
        \centering
        \includegraphics[width=1\textwidth]{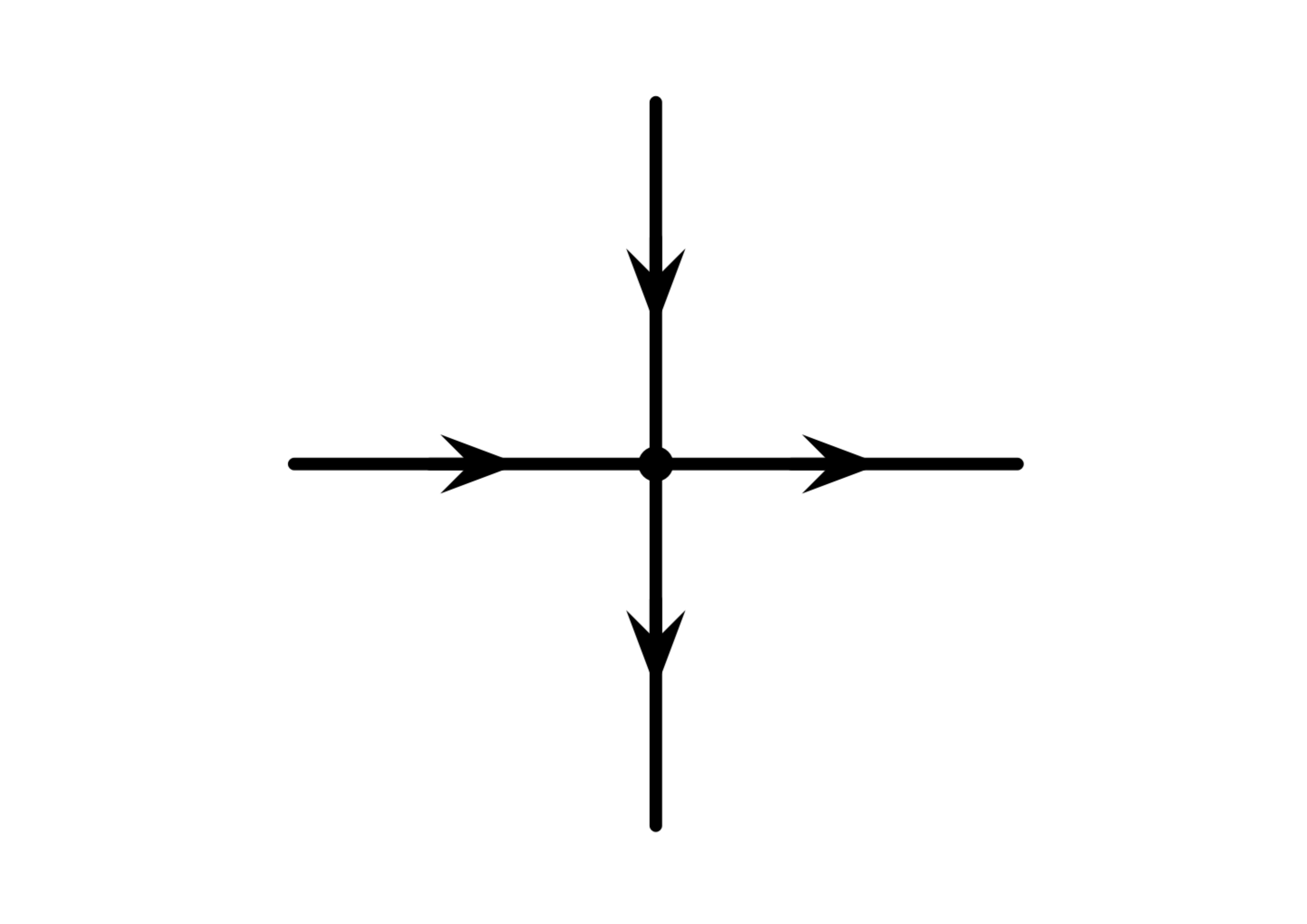}
        \subcaption*{3}
    \end{subfigure}
    \begin{subfigure}[t]{0.15\textwidth}
        \centering
        \includegraphics[width=1\textwidth]{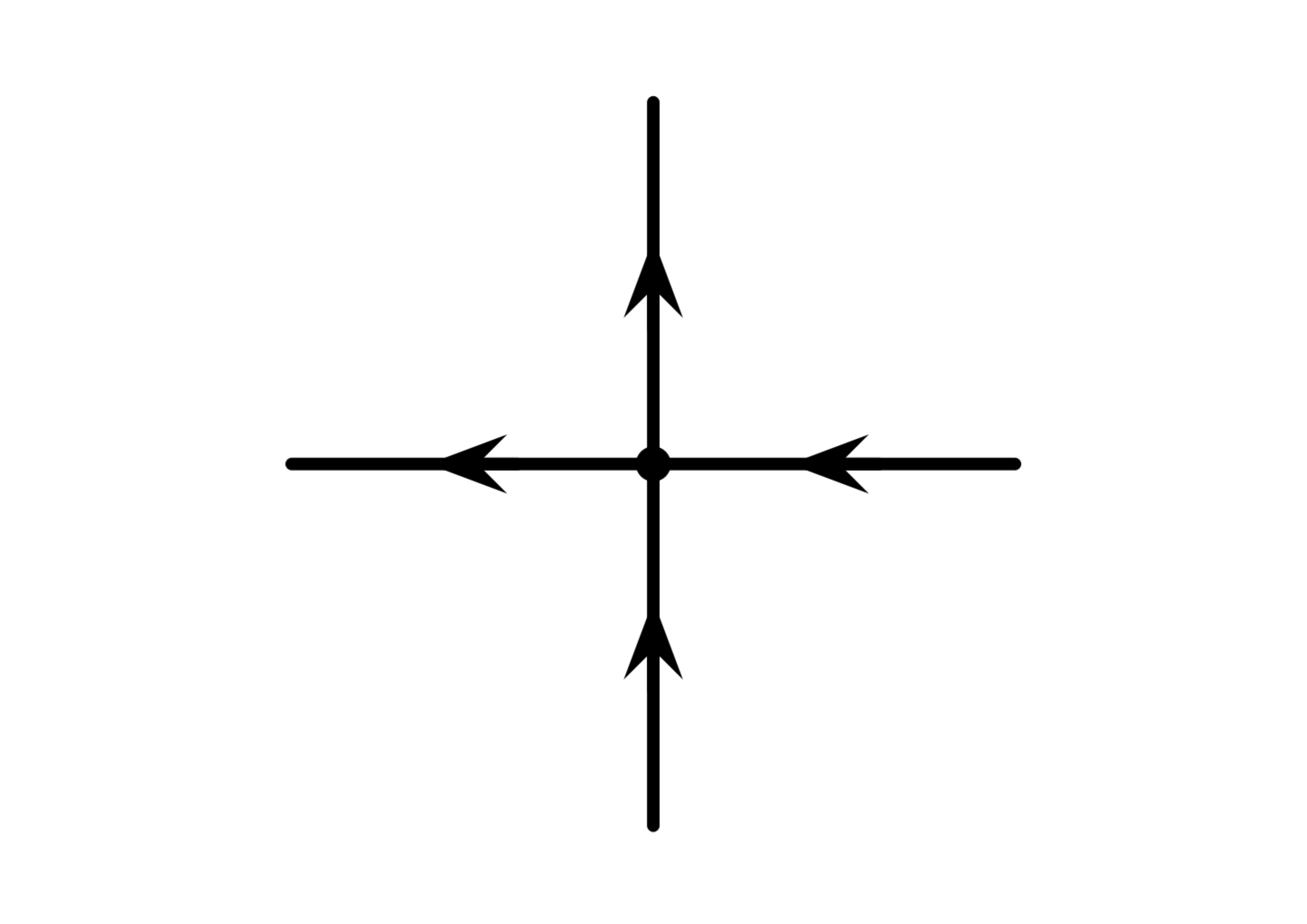}
        \subcaption*{4}
    \end{subfigure}
    \begin{subfigure}[t]{0.15\textwidth}
        \centering
        \includegraphics[width=1\textwidth]{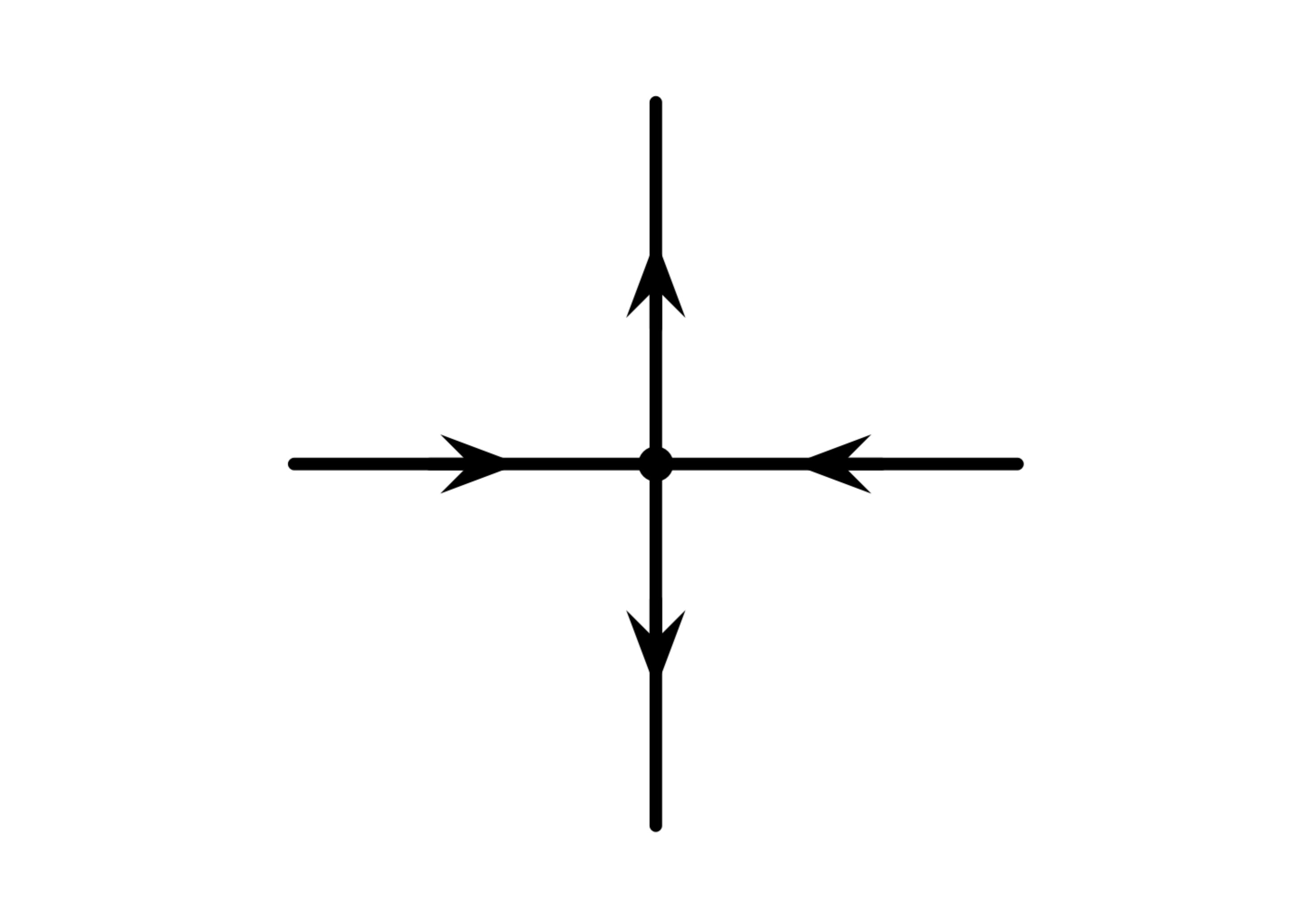}
        \subcaption*{5}
    \end{subfigure}
    \begin{subfigure}[t]{0.15\textwidth}
        \centering
        \includegraphics[width=1\textwidth]{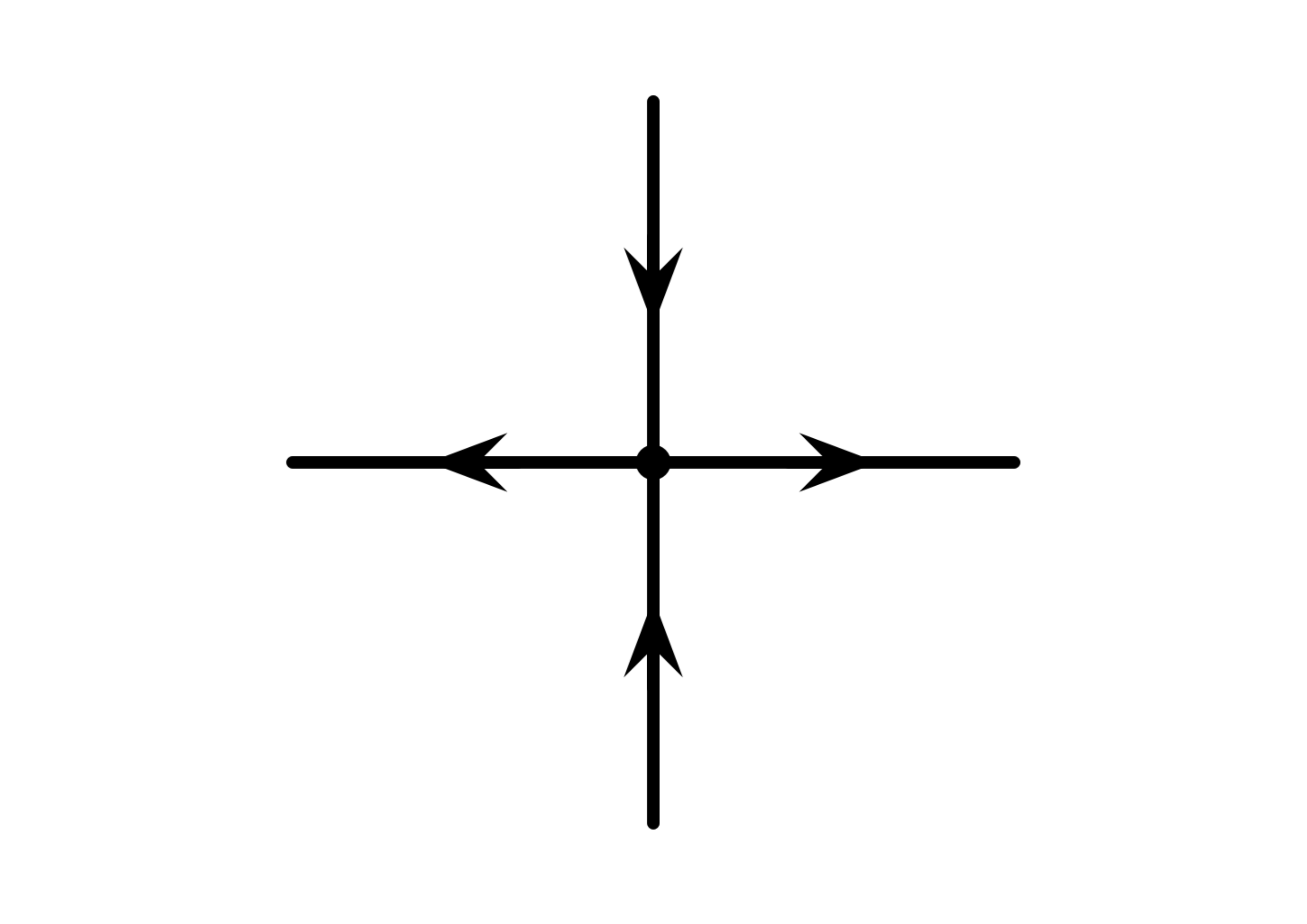}
        \subcaption*{6}
    \end{subfigure}
    \caption{Valid configurations of the six-vertex model}
    \label{six-type}
\end{figure}

In general, the six configurations $1$ to $6$ in \cref{six-type} are associated with six weights $\omega_1,\omega_2,\cdots,\omega_6$.
In this paper, we assume the \textit{arrow reversal symmetry}, i.e., $\omega_1=\omega_2=a, \omega_3=\omega_4=b, \omega_5=\omega_6=c$ and $a,b,c\ge 0$ as per the convention in physics.
The \emph{partition function} of the six-vertex model with parameters $(a,b,c)$ on a 4-regular graph $G$ is defined as
\begin{equation*}
    Z_{6V}(G;a,b,c)\coloneqq\sum_{\tau\in\mathcal{EO}(G)}a^{n_1+n_2}b^{n_3+n_4}c^{n_5+n_6},
\end{equation*}
where $\mathcal{EO}(G)$ is the set of all Eulerian orientations of $G$ and $n_i$ is the number of vertices of type $i (1\le i\le 6)$ in the graph under an Eulerian orientation $\tau\in\mathcal{EO}(G)$.

Interestingly, in physics the six-vertex model exhibits phase transition phenomena when the parameters $a, b, c$ vary (as a result of temperature change).
When the triangle inequalities hold ($a \le b + c$, $b \le a + c$, and $c \le a + b$), the six-vertex model is in the \emph{high-temperature regime}; otherwise, it is in the \emph{low-temperature regime}. 
For details, please refer to \cite{Baxter/book}.
Beyond physics, researchers have discovered many connections between the six-vertex model and other areas, like the alternating sign matrix (ASM) conjecture \cite{Kuperberg/1996/Another_ASM} and the Tutte polynomial\cite{Tutte/1954/Chromatic_Polynomials, Las/1988/Evaluation_Tutte}.
Furthermore, it can be extended to the eight-vertex model,
which has a close connection to other important models in statistical physics such as zero-field Ising model \cite{Baxter/1972/Partition_Eight_Vertex}.

As a sum-of-product computation, the partition function of the six-vertex model is a counting problem, and can be expressed as a family of \emph{Holant problems}.
Holant problem is a framework for studying counting problem and is expressive enough to contain classical frameworks such as \emph{Graph Homomorphism} (GH) and \emph{Counting Constraint Satisfaction Problems} (\#CSP) as special cases.
A series of theorems of complexity classification were built for GH and \#CSP for exact computation 
\cite{Goldberg/Grohe/Jerrum/Thurley/2010/Complexity_Dichotomy_Mixed_Signs,
Bulatov/Dyer/Goldberg/Jalsenius/Jerrum/Richerby/2012/Complexity_Counting_CSP,
Bulatov/2013/Complexity_Counting_CSP,
Heng_Guo/Williams/2013/Complexity_Counting_CSP_Complex,
Dyer/Richerby/2013/Dichotomy_Counting_CSP,
J.Cai/X.Chen/P.Lu/2013/GH_Complex,
J.Cai/H.Guo/T.Williams/2016/Vanishing_Signatures,
S.Huang/P.Lu/2016/Real_Holant,
J.Cai/X.Chen/2017/Complexity_Complex_Counting_CSP,
J.Lin/H.Wang/2018/Boolean_Holant,
M.Backens/2018/Complex_Holant,
J.Cai/Z.Fu/S.Shao/2021/Planar_Six_Vertex}  and approximation
\cite{M.Jerrum/A.Sinclair/1989/Approximating_Permanent,
A.Sinclair/1992/Mixing_FLow,
Jerrum/Sinclair/1993/Approximating_Ising,
D.Randall/P.Tetali/1998/Glauber_Comparison,
M.Luby/D.Randall/A.Sinclair/2001/MC_Planar_Lattice,
L.Goldberg/R.Martin/M.Paterson/2004/Sample_3coloring,
M.Jerrum/A.Sinclair/E.Vigoda/2004/Approximation_Permanent_Nonnegative,
Goldberg/Mark/2012/Approximating_Potts}.
However, results are very limited for Holant problems, in particular for their approximate complexity.

The six-vertex model is an important base case for studying Holant problems with asymmetric constraint functions since it cannot be expressed by GH or \#CSP~\cite{J.Cai/Z.Fu/M.Xia/2018/Complexity_Six_Vertex}.
For the exact computation, a complexity dichotomy was proved by Cai et al. \cite{J.Cai/Z.Fu/M.Xia/2018/Complexity_Six_Vertex}.

For the approximate complexity, Mihail and Winkler \cite{M.Mihail/P.Winkler/1996/EO} gave the first fully polynomial randomized approximation scheme (FPRAS) in the unweighted case, i.e., counting the Eulerian orientations of the underlying graph.
In the weighted case, Cai et al. \cite{J.Cai/T.Liu/P.Lu/2019/Approximability_Six_Vertex} showed that the approximability of the six-vertex model is dramatically similar to the phase transition phenomenon in statistical physics as the temperature varies.
They proved that there is no FPRAS in the entire low-temperature regime unless RP = NP, whereas an FPRAS was only given in a subregion of the high-temperature regime via \emph{Markov chain Monte Carlo(MCMC)}.
In particular, the parameter settings in this subregion are ``windable'' -- a notion proposed by McQuillan~\cite{C.McQuillan/2013/Winding} to systematically discover canonical paths and thus prove rapid mixing of Markov chains.
Fahrbach and Randall \cite{DBLP:conf/approx/FahrbachR19} prove that the  Glauber dynamics mixes exponentially slowly in the entire ferroelectric phase and a part of the anti-ferroelectric phase with some boundary conditions.
Besides, Cai and Liu~\cite{Cai/Liu/21/FPTAS_Vertex} gave a fully polynomial-time approximation scheme (FPTAS) for the six-vertex and the eight-vertex models on square lattice graphs when the temperature is sufficiently low.

Since physicists discovered some remarkable algorithms on the planar graph, such as the FKT algorithm 
\cite{Temperley/Fisher/1961/Dimer,
Kasteleyn/1961/Dimer,
Kasteleyn/1963/Dimer,
Kasteleyn/1967/Graph_Crystal}, the planar structure has attracted considerable attention.
Cai et al. \cite{J.Cai/Z.Fu/S.Shao/2021/Planar_Six_Vertex} established a complexity trichotomy for the six-vertex model by taking planar tractability into account.
In terms of approximate complexity, Cai and Liu~\cite{Cai/Liu/2020/FPRAS_Eight_Vertex} gave the FPRAS for a subregion of the low temperature regime on planar graphs, using the holographic transformation method introduced by Valiant~\cite{Valiant/2004/Holographic_Algorithms}.
Their work demonstrated that the eight-vertex model is the first problem with the provable property of being NP-hard to approximate on general graphs (and even \#P-hard for planar graphs in exact complexity), while possessing FPRAS in significant regions of its parameter space on both bipartite and planar graphs.

In this paper, we study the \emph{four-vertex model}, a sub-model of the six-vertex model.
In this model we set $b=0$, which means that the local configurations 3 and 4 in \cref{six-type} are disallowed, once the local labeling on the four edges around each vertex is determined.
We note that the roles of $a$ and $b$ are symmetric in the six-vertex model, as rotating the edge labelings by 90 degrees on all vertices in a graph turns an input from $(a, 0, c)$ to one from $(0, a, c)$.
As a remark, sub-models of the six-vertex model with four permissible local configurations have been studied in physics~\cite{Bogoliubov/2008/Four_Vertex_Tilings, Bogoliubov/2019/Four_Vertex_Partition}.

Note that $(a,0,c)$ is in the low-temperature regime of the six-vertex model when $a \neq c$, hence is NP-hard to approximate on general 4-regular graphs~\cite{J.Cai/T.Liu/P.Lu/2019/Approximability_Six_Vertex, CaiLLY20/Eight}.
Although the problem is hard in the worst case, in this paper, we identify situations when an efficient approximation algorithm exist.

Our approach involves a technique called ``circuit decomposition'' which enables us to reduce the four-vertex model to the Ising model, a fundamental model that has been extensively studied.
We then identify a system of linear equations over GF($2$) that, when solved, reveals a method for designing a rapid mixing Markov chain (known as the \emph{worm process}) for counting and sampling the four-vertex model.
Notably, the constraint functions of the four-vertex model are \emph{unwindable}, making this the first attempt to design an FPRAS for the six-vertex model with unwindable constraint functions.

When it comes to planar graphs, the partition function of the four-vertex model can be computed exactly in polynomial time because the constraint functions under the parameter setup $(a, 0, c)$ are matchgate signatures~\cite{J.Cai/Z.Fu/S.Shao/2021/Planar_Six_Vertex}.
However, in order to (at least approximately) sample from the state space, the traditional sampling-via-counting encounters an obstacle since in general the six-vertex model is not known to be self-reducible~\cite{Jerrum/Valiant/Vazirani/1986/counting_sampling}.
In this paper, we propose a ``canonical labeling'' for the planar four-vertex model based on the structure of planar 4-regular graphs.
Our work shows that, once the input planar graph has canonical labeling, the aforementioned system of linear equations is always solvable, enabling us to use the worm process to give an efficient sampling algorithm. 
Moreover, under this canonical labeling, the planar four-vertex model can be reduced to the Ising model that is defined on the medial graph of its underlying graph.
As a remark, we feel that it is unlikely that a natural class of graphs would all satisfy the algebraic criteria — a linear system is solvable. 
This further emphasizes our discovery that the class of planar graphs, including the primitive case in statistical physics (the square lattice), are all found to satisfy this criteria.

This paper is organized as follows: 
In Section 2, we formalize the six-vertex model and give other basic definitions;
then, we state the four-vertex model and prove that its constraint function is generally unwindable.
In Section 3, we present the ``circuit decomposition'' technique and the reduction to the Ising model, followed by the study the approximability of the four-vertex model.
In Section 4, we analyze the worm process for the four-vertex model and prove it is rapid mixing.
In Section 5, we introduce the equivalence between the planar four-vertex model under canonical labeling and the Ising model, and provide a sampling algorithm for the planar four-vertex model.

\section{Preliminaries}

\subsection{Six-vertex model}

To ease technical discussion, we present the six-vertex model as a Holant problem.
The Holant problem is defined as follows.
A signature or constraint function of arity $k$ is a map $f:\{0,1\}^k \to \mathbb{C}$.
In this paper we restrict $f$ to take values in $\mathbb{R}^+$.
Let $\mathcal{F}$ be a set of signatures.
A signature grid $\Omega = ( G,\pi )$ is a tuple, where $G=( V,E )$ is a graph, $\pi$ labels each $v\in V$ with a signature $f_v\in\mathcal{F}$ and the incident edges $E (v)$ at $v$ are identified as input variables of $f_v$, also labeled by $\pi$.
Every configuration $\sigma \in \{0,1\}^E$  gives an evaluation $\prod_{v\in V}f_v(\sigma|_{E(v)})$ where $\sigma|_{E(v)}$ denotes the restriction of $\sigma$ to $E (v)$. 
The problem ${\rm Holant}(\mathcal{F})$ on an instance $\Omega$ is to compute
\begin{equation*}
    {\rm Holant}(\Omega ; \mathcal{F})\coloneqq\sum_{\sigma \in \{0,1\}^E}\prod_{v\in V}f_v(\sigma|_{E(v)}).
\end{equation*}
We denote ${\rm Holant}(\mathcal{F} | \mathcal{G})$ for Holant problems over signature grids with a bipartite graph $( U, V, E)$ where each vertex in $U$ (or $V$) is labeled by a function in $\mathcal{F}$ (or $\mathcal{G}$, respectively).

To write the six-vertex model on a 4-regular graph $ G=( V,E ) $ as a Holant problem, we consider its edge-vertex incidence graph.
The edge-vertex incidence graph $G^\prime=( U_E,U_V,E^\prime)$ is a bipartite graph where $(u_e,u_v)\in U_E\times U_V$ is an edge in $E^\prime$ iff $e\in E$ is incident to $v\in V$.
A configuration of the six-vertex model on $G$ is an edge 2-coloring on $G^\prime$, namely $\sigma: E^\prime\to\{0,1\}$.
We model an orientation of edges $e\in E$ by requiring ``one-0 one-1'' for the two edges incident to $u_e\in U_E$ and model the ice rule by requiring ``two-0 two-1'' for the four edges incident to each vertex $u_v\in U_V$.

The ``one-0 one-1'' requirement is a binary Diseqality constraint, denoted by ($\neq_2$). 
We write the values of a $4$-ary function $f$ as a matrix $$M(f)=
\left[ {\begin{array}{*{20}{c}}
    {f_{0000}}&{f_{0010}}&{f_{0001}}&{f_{0011}}\\
    {f_{0100}}&{f_{0110}}&{f_{0101}}&{f_{0111}}\\
    {f_{1000}}&{f_{1010}}&{f_{1001}}&{f_{1011}}\\
    {f_{1100}}&{f_{1110}}&{f_{1101}}&{f_{1111}}
\end{array}} \right]$$ called the constraint matrix of $f$.
The constraint function of the six-vertex model with arrow reversal symmetry has the form $M(f)=
\left[ {\begin{array}{*{20}{c}}
    0&0&0&a\\
    0&b&c&0\\
    0&c&b&0\\
    a&0&0&0
\end{array}} \right]$.
The partition function of the six-vertex model can be expressed as a Holant problem
\begin{equation*}
    Z_{6V}(G;a,b,c)={\rm Holant}(\neq_2 | f). 
\end{equation*}
Notice that the partition functions are multiplicative over all connected components, so we may assume $G$ is connected.

In this paper, we focus on the  four-vertex model whose constraint matrix has the form $M(f^*)=
\left[ {\begin{array}{*{20}{c}}
    0&0&0&a\\
    0&0&c&0\\
    0&c&0&0\\
    a&0&0&0
\end{array}} \right]$. The partition function of the four-vertex model is
\begin{equation*}
    Z_{4V}(G;a,c)=\sum_{\tau\in\mathcal{EO}(G)}a^{n_1+n_2}c^{n_5+n_6}.
\end{equation*}
The exact computation of $Z_{4V}(G;a,c)$ is \#P-hard according to \cite{J.Cai/Z.Fu/M.Xia/2018/Complexity_Six_Vertex}.
Moreover, there is no FPRAS for it in general according to \cite{J.Cai/T.Liu/P.Lu/2019/Approximability_Six_Vertex}.

Note that $Z_{4V}(G;a,c)=c^{n_1+n_2+n_5+n_6}\sum_{\tau\in\mathcal{EO}(G)}\left(\frac{a}{c}\right)^{n_1+n_2}$ and $n_1+n_2+n_5+n_6$ is the number of vertices under the four-vertex model setting.
We sometimes normalize the constraint function by ignoring the constant factor $c^{n_1+n_2+n_5+n_6}$ for convenience.
Let $\beta = a/c$, the normalized constraint function is  $M(f^*)=
\left[ {\begin{array}{*{20}{c}}
    0&0&0&\beta\\
    0&0&1&0\\
    0&1&0&0\\
    \beta&0&0&0
\end{array}} \right]$.
In the remainder of this paper, we always assume that the constraint function has been normalized.

\subsection{Ising model}

The Ising model is a basic model in statistical physics.
Since it is simple in expression and has wide application, it has been widely studied in physics, mathematics and computer science.
From the algorithmic perspective, given a graph $G=(V,E)$, the Ising model on the graph $G$ with parameters $\beta_e, e\in E$ is defined as follows.
For any $\sigma\in{\{0, 1\}}^V$, the probability of being in configuration $\sigma$ is
\begin{equation*}
    \pi(\sigma)=\frac{\prod_{e\in m(\sigma)}\beta_e}{Z_{Ising}},
\end{equation*}
where $m(\sigma)$ is the set of mono-chromatic edges in $\sigma$, i.e., $(u,v)\in m(\sigma)$ iff $\sigma(u)=\sigma(v)$; and its partition function is defined as
\begin{equation*}
    Z_{Ising}=\sum_{\sigma\in{\{0, 1\}}^V}\prod_{e\in m(\sigma)}\beta_e.
\end{equation*}

The computation of the partition function is \#P-complete.
For the approximate computation,
when $\beta_e>1$, for all $e\in E$, the system is ferromagnetic, and the partition function can be approximated in polynomial time via MCMC \cite{Jerrum/Sinclair/1993/Approximating_Ising};
when $\beta_e<1$, for some $e\in E$, the system is anti-ferromagnetic, and the partition function remains NP-hard to approximate.

\subsection{Approximation algorithm}\label{Approximation algorithm}

A randomized approximation scheme for a counting problem $f:\Sigma^*\to\mathbb{R}$ is a randomized algorithm that takes as input an instance $x\in\Sigma^*$ and an error tolerance $\varepsilon>0$,
and outputs a number $Y$ such that, for every instance $x$,
\begin{equation*}
    {\rm Pr}[(1-\varepsilon)f(x)\le Y\le(1+\varepsilon)f(x)]\ge\frac{3}{4}.
\end{equation*}
If the algorithm runs in time bounded by a polynomial of $|x|$ and $\varepsilon^{-1}$, we call it a \textit{fully polynomial randomized approximation scheme(FPRAS)}.

A standard approach to reducing almost uniform sampling to approximate counting involves using the powerful Markov chain Monte Carlo method. 
However, to ensure efficiency, it is necessary to bound the convergence rate of the Markov chains used in the sampler.
For a Markov chain with finite state space $\Omega$, transition matrix $P$ and stationary distribution $\pi$, its mixing time is defined as
\begin{equation*}
    \tau_\varepsilon(P)\coloneqq\min\{ t:\max_{x\in\Omega}{\left\lVert P^t(x,\cdot)-\pi\right\rVert }_{\rm TV}\le \varepsilon \},
\end{equation*}
where ${\left\lVert \cdot \right\rVert }_{\rm TV}$ is the total variation distance, i.e.,
\begin{equation*}
    {\left\lVert \pi-\pi^\prime \right\rVert }_{\rm TV}\coloneqq\frac{1}{2}\sum_{x\in\Omega}|\pi(x)-\pi^\prime(x)|.
\end{equation*}

One of the most useful techniques to bound mixing time is canonical path.
Let $\Psi \subseteq \Omega$ be a nonempty set and
\begin{equation*}
    \Gamma =\{\gamma_{xy}:(x,y)\in\Omega\times\Psi \}
\end{equation*}
be a collection of paths, where $\gamma_{xy}$ is a \textit{canonical path} from $x$ to $y$ using the transition of the Markov chain.
The ``local'' congestion $\varrho(\Psi;\Gamma)$ associated with the state set $\Psi$ and these paths is
\begin{equation*}
    \varrho(\Psi;\Gamma) \coloneqq \max_{(z,z^\prime)\in \Omega^2,P(z,z^\prime)>0}\frac{L(\Gamma)}{\pi(\Psi)\pi(z)P(z,z^\prime)}\sum_{(x,y)\in\Omega\times\Psi,\gamma_{xy}\ni (z,z^\prime)}\pi(x)\pi(y) ,
\end{equation*}
where $L(\Gamma)$ is the length of the longest path in $\Gamma$.
The congestion can be used to bound the mixing time by the following theorem.
\begin{theorem}\label{thm: cogistion and mixing time}
    {\upshape (\cite{Schweinsberg/2002/Relaxation_Time})}
    For an irreducible and aperiodic Markov chain with finite state space $\Omega$, transition matrix $P$, stationary distribution $\pi$ and any initial state $x_0\in\Omega$,
    \begin{equation*}
        \tau_\varepsilon(P)\le 4\varrho(\Psi;\Gamma)(\ln\frac{1}{\pi(x_0)}+\ln\frac{1}{\varepsilon}).
    \end{equation*}
\end{theorem}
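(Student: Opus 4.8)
The statement is the classical canonical-path (multicommodity-flow) bound on mixing time, sharpened by the use of the distinguished set $\Psi$, so the plan is the standard two-stage route: first prove a Poincaré inequality relating the stationary variance to the Dirichlet form through the paths in $\Gamma$, thereby lower-bounding the spectral gap by $\varrho(\Psi;\Gamma)^{-1}$ up to constants, and then convert this gap into a total-variation mixing bound. I treat the reversible case, which is the one relevant to the worm process; write $Q(z,z')\coloneqq\pi(z)P(z,z')$ for the ergodic flow across a transition, and for $\varphi:\Omega\to\mathbb{R}$ set
\begin{equation*}
\mathcal{E}(\varphi)=\tfrac12\sum_{z,z'}Q(z,z')\big(\varphi(z)-\varphi(z')\big)^2,\qquad \mathrm{Var}_\pi(\varphi)=\tfrac12\sum_{x,y}\pi(x)\pi(y)\big(\varphi(x)-\varphi(y)\big)^2.
\end{equation*}
Since $1-\lambda_2=\min_\varphi \mathcal{E}(\varphi)/\mathrm{Var}_\pi(\varphi)$ for a reversible chain, it suffices to establish $\mathrm{Var}_\pi(\varphi)\le 2\varrho(\Psi;\Gamma)\,\mathcal{E}(\varphi)$.

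The first step is, to my mind, the key one: exploiting $\Psi$ so that only the paths actually supplied in $\Gamma$ (those terminating in $\Psi$) are needed, and simultaneously manufacturing the $\pi(\Psi)^{-1}$ factor appearing in $\varrho(\Psi;\Gamma)$. Let $\varphi_\Psi=\pi(\Psi)^{-1}\sum_{y\in\Psi}\pi(y)\varphi(y)$ be the $\pi$-average of $\varphi$ over $\Psi$. Because the variance is minimized at the true mean, $\mathrm{Var}_\pi(\varphi)\le\sum_x\pi(x)(\varphi(x)-\varphi_\Psi)^2$; expressing $\varphi(x)-\varphi_\Psi$ as a $\pi$-weighted average over $y\in\Psi$ of the differences $\varphi(x)-\varphi(y)$ and applying Jensen's inequality yields
\begin{equation*}
\mathrm{Var}_\pi(\varphi)\le\frac{1}{\pi(\Psi)}\sum_{x\in\Omega}\sum_{y\in\Psi}\pi(x)\pi(y)\big(\varphi(x)-\varphi(y)\big)^2.
\end{equation*}
This is precisely the reduction that lets one route flow only into $\Psi$ rather than between all pairs of states.

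Next I route each surviving difference along its canonical path. For $(x,y)\in\Omega\times\Psi$, telescoping along $\gamma_{xy}$ and Cauchy--Schwarz give $\big(\varphi(x)-\varphi(y)\big)^2\le L(\Gamma)\sum_{(z,z')\in\gamma_{xy}}\big(\varphi(z)-\varphi(z')\big)^2$. Substituting this and exchanging the order of summation so that the outer index runs over transitions produces
\begin{equation*}
\mathrm{Var}_\pi(\varphi)\le\frac{L(\Gamma)}{\pi(\Psi)}\sum_{(z,z')}\big(\varphi(z)-\varphi(z')\big)^2\!\!\sum_{\substack{(x,y)\in\Omega\times\Psi\\ \gamma_{xy}\ni(z,z')}}\!\!\pi(x)\pi(y).
\end{equation*}
Multiplying and dividing the inner term by $Q(z,z')=\pi(z)P(z,z')$, the leading ratio on each transition is at most $\varrho(\Psi;\Gamma)$ by its very definition, so $\mathrm{Var}_\pi(\varphi)\le\varrho(\Psi;\Gamma)\sum_{(z,z')}Q(z,z')(\varphi(z)-\varphi(z'))^2=2\varrho(\Psi;\Gamma)\,\mathcal{E}(\varphi)$, whence $1-\lambda_2\ge\big(2\varrho(\Psi;\Gamma)\big)^{-1}$.

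Finally I convert the gap into the stated bound. The reversible $L^2$ contraction estimate gives $\lVert P^t(x_0,\cdot)-\pi\rVert_{\mathrm{TV}}\le\tfrac12\,\pi(x_0)^{-1/2}\lambda_{\ast}^{\,t}$ with $\lambda_{\ast}=\max(\lambda_2,|\lambda_{\min}|)$; solving for $t$ and using $-\ln\lambda_{\ast}\ge 1-\lambda_{\ast}$ yields a bound of the shape $(1-\lambda_{\ast})^{-1}\big(\ln\tfrac{1}{\pi(x_0)}+\ln\tfrac1\varepsilon\big)$, and combining with the spectral-gap estimate gives the constant $4$. The one genuinely delicate point, and the main obstacle, is that the path argument controls only $\lambda_2$, whereas $\lambda_{\ast}$ also involves the most negative eigenvalue $\lambda_{\min}$: aperiodicity guarantees $\lambda_{\min}>-1$, and passing to the lazy chain $\tfrac12(I+P)$ (which halves the gap but forces $\lambda_{\min}\ge0$, so $\lambda_{\ast}=\lambda_2$) is exactly what turns the factor $2$ from the Poincaré step into the stated $4$. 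The combinatorial substance of the theorem is entirely contained in the congestion estimate of the preceding paragraphs; this spectral handling of the negative part of the spectrum is the only place where the aperiodicity hypothesis is used.
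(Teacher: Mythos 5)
The paper never proves this statement --- it is imported verbatim from Schweinsberg --- so your attempt can only be measured against the standard argument, and your first two steps are exactly that argument, carried out correctly: averaging over $\Psi$ with Jensen's inequality to get $\mathrm{Var}_\pi(\varphi)\le \pi(\Psi)^{-1}\sum_{x\in\Omega}\sum_{y\in\Psi}\pi(x)\pi(y)(\varphi(x)-\varphi(y))^2$, then telescoping with Cauchy--Schwarz along the canonical paths and exchanging summation to conclude $\mathrm{Var}_\pi(\varphi)\le 2\varrho(\Psi;\Gamma)\,\mathcal{E}(\varphi)$, hence $1-\lambda_2\ge \bigl(2\varrho(\Psi;\Gamma)\bigr)^{-1}$. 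This is the combinatorial heart of Schweinsberg's refinement, and your observation that the $\pi(\Psi)^{-1}$ in the congestion is manufactured precisely by centering at the $\Psi$-average rather than the true mean is the right one. Restricting to reversible chains is also appropriate: the variational characterization of $\lambda_2$ you invoke requires it, and the paper's statement silently omits this hypothesis even though the (Metropolis-filtered) worm chain satisfies it.

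The genuine gap is in your final spectral step. Passing to the lazy chain $\tfrac12(I+P)$ bounds the mixing time of $\tfrac12(I+P)$, not of $P$: there is no general inequality $\tau_\varepsilon(P)\le\tau_\varepsilon\bigl(\tfrac12(I+P)\bigr)$, and aperiodicity gives no quantitative control on $\lambda_{\min}$, so your argument never controls $\lambda_\ast$ for the original chain. In fact, under the hypotheses as literally stated the theorem is false. Take $\Omega=\{1,2\}$ with $P(1,2)=P(2,1)=1-\delta$ and $P(i,i)=\delta$ for small $\delta>0$; this chain is reversible, irreducible and aperiodic with $\pi$ uniform. Choosing $\Psi=\{1\}$ with the one-step path from $2$ to $1$ gives $L(\Gamma)=1$ and $\varrho(\Psi;\Gamma)=(1-\delta)^{-1}=O(1)$, yet $\lVert P^t(1,\cdot)-\pi\rVert_{\mathrm{TV}}=\tfrac12(1-2\delta)^t$, so $\tau_\varepsilon(P)=\Theta\bigl(\delta^{-1}\ln(1/\varepsilon)\bigr)$, which exceeds $4\varrho(\Psi;\Gamma)\bigl(\ln\tfrac{1}{\pi(x_0)}+\ln\tfrac1\varepsilon\bigr)$ for $\delta$ small. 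The missing hypothesis is that the spectrum of $P$ is nonnegative (equivalently, that the theorem is applied to an already-lazy chain), which is exactly how the paper uses it: the authors lazify the worm process \emph{before} invoking the theorem. Under that hypothesis your own computation already finishes the proof with constant $2$: with $\lambda_\ast=\lambda_2$, the bound $\lVert P^t(x_0,\cdot)-\pi\rVert_{\mathrm{TV}}\le\tfrac12\pi(x_0)^{-1/2}\lambda_2^{\,t}$ together with $-\ln\lambda_2\ge 1-\lambda_2\ge(2\varrho(\Psi;\Gamma))^{-1}$ yields $\tau_\varepsilon(P)\le 2\varrho(\Psi;\Gamma)\bigl(\ln\tfrac{1}{\pi(x_0)}+\ln\tfrac{1}{\varepsilon}\bigr)$, which implies the stated inequality a fortiori. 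So the lazification detour is both invalid as a step (it changes the chain) and unnecessary (the constant $4$ is not produced the way you claim; it is simply slack, or absorbs the doubling of congestion when the constant-$2$ bound is applied to a lazified proposal).
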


\subsection{Windability}

Windability, proposed by McQuillan \cite{C.McQuillan/2013/Winding}, is a technique used to systematically design canonical paths. 
It reduces the task of designing canonical paths to solving a set of linear equations, making it easier to design them. 
Specifically, for a Holant problem, windability allows for automatic discovery of canonical paths if all constraint functions are ``windable''. 
Additionally, Huang et al. \cite{L.Huang/P.Lu/C.Zhang/2016/Canonical_Path} simplified the conditions for checking whether a function is windable. 
Since then, the application of windability has been further extended. 
The definition of windability is as follows:
\begin{definition}\label{def: windability}
	For any finite set $J$ and any configuration $x \in \{0,1\}^J$, define $\mathcal{M}_x$ to be the set of partitions of $\{i\mid x_i=1 \}$ into pairs and at most one singleton. A function $f:\{0,1\}^J \to \mathbb{Q}^+$ is windable if there exist values $B(x,y,M) \ge 0$ for all $x,y \in \{0,1\} ^J$ and all $M \in \mathcal{M}_{x \oplus y}$ satisfying:
	\begin{itemize}
	\item $f(x)f(y) = \sum_{M \in \mathcal{M}_{x \oplus y}} B(x,y,M) $  for all $x,y \in \{ 0,1\} ^J$.
	\item $B(x,y,M) = B(x \oplus S,y \oplus S,M)$ for all $x,y \in \{ 0,1\} ^J$ and $S \in M \in \mathcal{M}_{x \oplus y}$.
	\end{itemize}
	Here $x \oplus S$ denotes the vector obtained by changing ${x_i}$ to $1 - {x_i}$ for the one or two elements i in S.
\end{definition}

Note that for the six-vertex model with arrow reversal symmetry, the constraint function is windable if $a^2\le b^2+c^2$ and $b^2\le c^2+a^2$ and $c^2\le a^2+b^2$, i.e., \cite{J.Cai/T.Liu/P.Lu/2019/Approximability_Six_Vertex} gave an FPRAS for the six-vertex model when the constraint function is windable. 
We prove that the constraint function $f^*$ is unwindable in the following proposition, i.e., we will give an FPRAS for the six-vertex model without windability.

\begin{proposition}\label{unwindable}
	The constraint function ${f^*}$  with $a\neq c$ is unwindable.
\end{proposition}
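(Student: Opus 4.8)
The plan is to derive a contradiction directly from the two defining conditions of windability in \cref{def: windability}, exploiting the fact that $f^*$ is supported on only four of the sixteen strings in $\{0,1\}^J$ with $J=\{1,2,3,4\}$, namely $\mathrm{supp}(f^*)=\{0011,0110,1001,1100\}$ with values $a,c,c,a$ respectively. Since a witness $B(x,y,M)$ can contribute to the first condition only when $f^*(x)f^*(y)>0$, I would restrict attention to pairs $x,y\in\mathrm{supp}(f^*)$, and in particular to the two (unordered) pairs whose symmetric difference is the all-ones string: $(0011,1100)$ with $f^*(0011)f^*(1100)=a^2$, and $(0110,1001)$ with $f^*(0110)f^*(1001)=c^2$. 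For such a pair $x\oplus y=1111$, so $\mathcal{M}_{1111}$ consists exactly of the three perfect matchings $P_1=\{\{1,2\},\{3,4\}\}$, $P_2=\{\{1,3\},\{2,4\}\}$, and $P_3=\{\{1,4\},\{2,3\}\}$ of $\{1,2,3,4\}$ into pairs.

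The key observation I would then exploit is an asymmetry among these three matchings under the transport condition $B(x,y,M)=B(x\oplus S,y\oplus S,M)$. For the matchings $P_1$ and $P_3$, flipping $x=0011$ by either block $S$ leaves the support: for instance $0011\oplus\{1,2\}=1111$ and $0011\oplus\{1,4\}=1010$, neither of which lies in $\mathrm{supp}(f^*)$. Hence $f^*(x\oplus S)f^*(y\oplus S)=0$, and since the first windability condition writes this product as a sum of nonnegative $B$-values, every $B(x\oplus S,y\oplus S,M)$ must vanish; transporting back gives $B(0011,1100,P_1)=B(0011,1100,P_3)=0$, and identically $B(0110,1001,P_1)=B(0110,1001,P_3)=0$. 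The first condition then forces the entire product onto the surviving matching $P_2$, yielding $B(0011,1100,P_2)=a^2$ and $B(0110,1001,P_2)=c^2$.

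To close, I would use the fact that $P_2$ is precisely the matching that preserves the support: taking $S=\{2,4\}\in P_2$ sends $0011\mapsto 0110$ and $1100\mapsto 1001$, so the transport condition gives $B(0011,1100,P_2)=B(0110,1001,P_2)$. Combining the two equalities above yields $a^2=c^2$, and since $a,c\ge 0$ this means $a=c$, contradicting $a\neq c$; therefore $f^*$ is unwindable. The main obstacle, and really the heart of the argument, is the combinatorial bookkeeping that isolates $P_2$: one must verify that $P_2$ is the unique matching of $\{1,2,3,4\}$ that maps $\mathrm{supp}(f^*)$ onto itself, while $P_1$ and $P_3$ escape it. Everything else is a direct application of nonnegativity together with the two windability conditions.
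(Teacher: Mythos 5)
Your proof is correct and takes essentially the same route as the paper's: restrict to the two complementary support pairs with $x \oplus y = 1111$, eliminate two of the three perfect matchings by combining nonnegativity of $B$ with the transport condition (the flips land outside the support, forcing those $B$-values to zero), and then transport along the unique support-preserving matching to conclude $a^2 = c^2$, contradicting $a \neq c$. One bookkeeping caveat: under the paper's constraint-matrix convention (columns ordered $f_{\cdot\cdot 00}, f_{\cdot\cdot 10}, f_{\cdot\cdot 01}, f_{\cdot\cdot 11}$), the support of $f^*$ is $\{0011, 0101, 1010, 1100\}$ rather than your $\{0011, 0110, 1001, 1100\}$, so the surviving matching is $\{\{1,4\},\{2,3\}\}$ (the paper's $M_3$) instead of your $P_2 = \{\{1,3\},\{2,4\}\}$ --- you have implicitly swapped variables $x_3$ and $x_4$, which is harmless since windability is invariant under permuting coordinates, and your argument goes through verbatim after this relabeling.
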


\begin{proof}
    Assume to the contrary that the constraint function $f^*$ is windable.
	First, we set $x=0101$ and $y=1010$, then 
	\begin{equation*}
		\mathcal{M}_{x \oplus y}=\left\{M_1,M_2,M_3\right\},
	\end{equation*}
	where $M_1=\{(x_1,x_2),(x_3,x_4)\}$, $M_2=\{(x_1,x_3),(x_2,x_4)\}$, $M_3=\{(x_1,x_4),(x_2,x_3)\}$. 
	According to \cref{def: windability}, there exists $B\left( {x,y,M} \right) \ge 0$ such that ${f^*}(x) \cdot {f^*}(y) = \sum\nolimits_{M \in {{{\cal M}'}_{x \oplus y}}} {B(x,y,M)}  = {c^2}$.
	Moreover, for $M_1$, we have 
	\begin{align*}
		B\left(x,y,{M_1} \right)= B(x\oplus\{x_1,x_2\},y\oplus\{x_1,x_2\},M_1)= B\left( {1001,0110,{M_1}} \right) = 0
	\end{align*}
	since ${f^*}(1001) \cdot {f^*}(0110) = 0$. 
	Similarly, for $M_3$, we have
    \begin{align*}
		B\left(0101,1010,{M_2}\right) = B\left( {1111,0000,{M_2}} \right) = 0.
	\end{align*}
    Thus,
    \begin{equation*}
        c^2=B(0101,1010,M_3).
    \end{equation*}
    Now we set $x=0011$ and $y=1100$.
    By a very similar analysis for $x=0101$ and $y=1010$, we get 
    \begin{equation*}
        a^2=B(0011,1100,M_3).
    \end{equation*}
	Since $B(0101,1010,M_3)=B(0011,1100,M_3)$, this implies that $a=c$ which contradicts that $a\neq c$.
	Therefore, $f^*$ with $a\neq c$ is unwindable. 
\end{proof}

\section{Reduction}

\subsection{Circuit decomposition}

Given a simple graph $G=(V, E)$ and its corresponding edge-vertex graph $G^\prime=( U_E,U_V,E^\prime)$, 
recall that we label $f^*$ on each vertex in $U_V$ and $\neq_2$ on each vertex in $U_E$ in the Holant problems.
We divide the four variables of $f^*$ into two pairs $(x_1, x_4)$ and $(x_2, x_3)$, and note that the constraint function $f^*$ forces the variables in the same pair to take opposite values in each valid configuration, 
such that corresponding edges must be pairwise incoming and outgoing. 
This observation motivates the following \emph{circuit decomposition} for the underlying graph:

\begin{itemize}
	\item Select an arbitrary vertex and an incident edge, and begin tracing a trail from this edge;
	\item for the vertices in $U_E$ with degree $2$, the trail proceeds directly through them;
	\item for the vertices in $U_V$ with degree $4$, the trail proceeds through the edges corresponding to variables in the same pair. Specifically, if a path enters a vertex from the edge $x_1$ or $x_4$ ($x_2$ or $x_3$), then it leaves the vertex to the edge $x_4$ or $x_1$ ($x_3$ or $x_2$), respectively;
	\item when the trail returns to the starting vertex, it forms a circuit. Remove this circuit from the graph and repeat the process until the graph is empty.
\end{itemize}
See \cref{fig:label} for an example.

\begin{figure}[ht]		
	\centering
	\includegraphics[width=6cm]{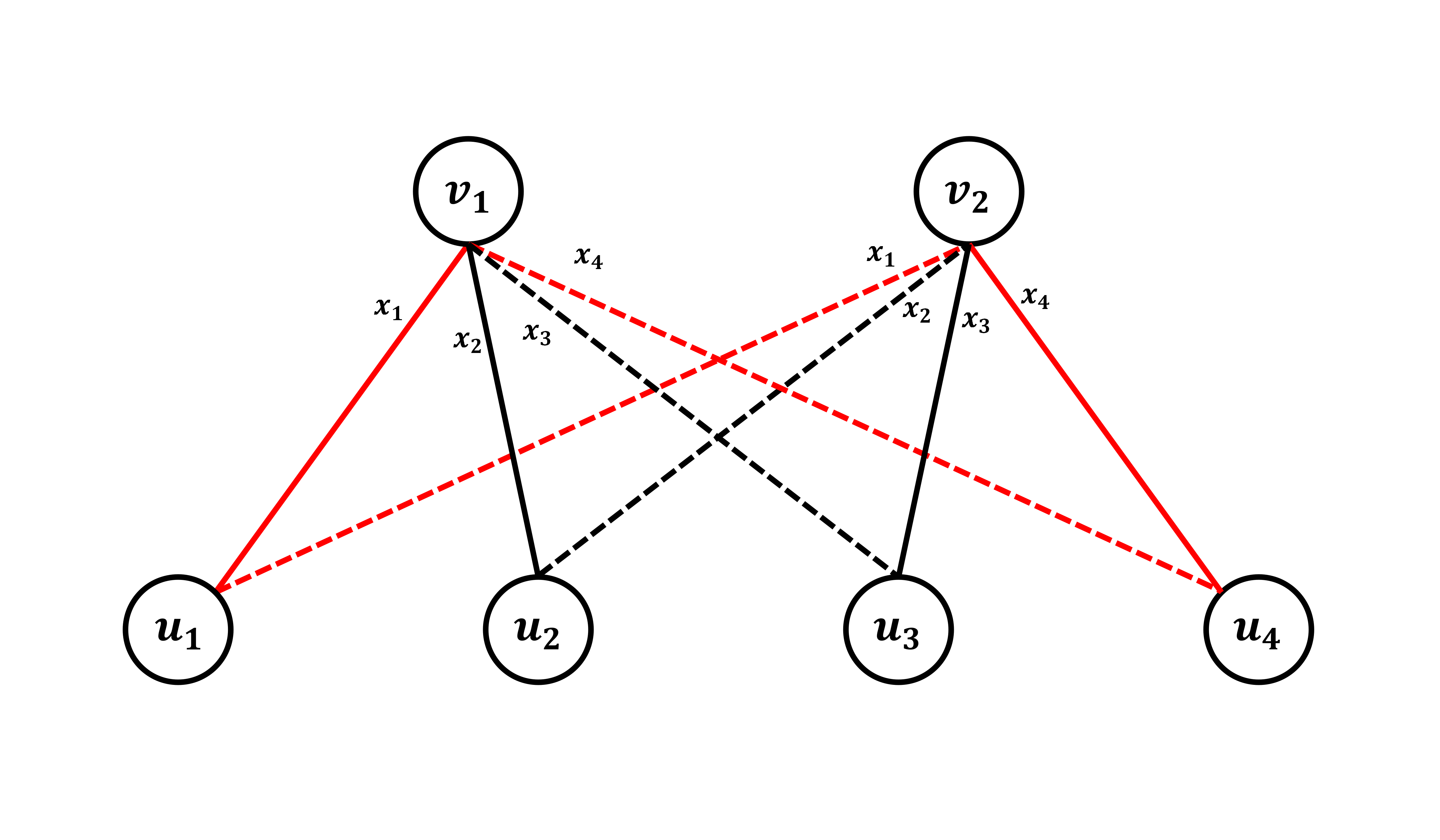}
	\caption{An example of circuit decomposition: the red edges and the black edges form two circuits 
        $C_1(v_1\to u_1\to v_2\to u_4\to v_1)$ and $C_2(v_2\to u_2\to v_1\to u_3\to v_2)$.
		In the same circuit, the dashed edges and the solid edges take opposite assignments.
	} 
	\label{fig:label}	
\end{figure}

Let $\mathcal{C}$ denote the set of circuits resulting from the circuit decomposition, and let $m = |\mathcal{C}|$.
We index each circuit in $\mathcal{C}$ and denote the set of circuits as $\mathcal{C}=\{C_1,C_2,\cdots,C_m\}$.
Now we redefine the valid configuration of the four-vertex model in terms of the circuit decomposition $\mathcal{C}$.
Note that the constraint function $f^*$ and $\neq_2$ force the edges in the same circuit in $\mathcal{C}$ take the values  $\{0,1\}$ alternately.
For each circuit ${C_i} $, we arbitrarily select an initial edge  $e_i$. 
The assignments of all edges in $C_i$ are determined by the assignment of $e_i$ in a valid configuration.
Therefore, we can define the assignment of the circuit $C_i$ as the value of the initial edge $e_i$, i.e., a valid configuration assigns a value for each circuit in $\mathcal{C}$. 
Conversely, given an assignment for each $C_i\in\mathcal{C}$, it gives a valid configuration for the four-vertex model.

We remark that a similar (but not identical) idea of circuit decomposition was used in \cite{Cai/Liu/20/8V_PM} to establish connections between the eight-vertex model (after holographic transformations) and the Ising model.

\subsection{Vertex classification}

Note that if four edges of a vertex in $U_V$ belong to only one circuit, the value of the constraint function at this vertex remains constant regardless of the circuit's value.
Therefore, we can assume that each vertex in $U_V$ belongs to two different circuits.
To classify the common vertices of two circuits, we differentiate them based on the variation in constraint function values when they have the same input.
Formally, consider circuit $C_i$ and $C_j$ have a common vertex $v$, in the sense of symmetry, the constraint function at $v$ might have four cases as follows:
\begin{equation*}
\begin{split}
    f^*_v(x_1,x_2,x_3,x_4)=f^*(C_i,C_j,\overline{C_j},\overline{C_i}); f^*_v(x_1,x_2,x_3,x_4)=f^*(\overline{C_i},\overline{C_j},C_j,C_i);\\
    f^*_v(x_1,x_2,x_3,x_4)=f^*(C_i,\overline{C_j},C_j,\overline{C_i}); f^*_v(x_1,x_2,x_3,x_4)=f^*(\overline{C_i},C_j,\overline{C_j},C_i).
\end{split}
\end{equation*}

The first two cases occur when the constraint function at vertex $v$ takes the value $\beta$ if circuits $C_i$ and $C_j$ are assigned the same value and $1$ if they are assigned different values. 
The last two cases are the opposite. 
If the constraint function at $v$ falls into the first two cases, we call $v$ an \textit{agree-vertex} between circuits $C_i$ and $C_j$. 
Otherwise, we call $v$ a \textit{disagree-vertex} between $C_i$ and $C_j$.
For two circuits $C_i,C_j\in\mathcal{C}$, let
\begin{align*}
    &A(i,j)\coloneqq\# \text{agree-vertices between } C_i\text{ and } C_j,\\
    &D(i,j)\coloneqq\# \text{disagree-vertices between } C_i\text{ and } C_j.
\end{align*}

Now we define the graph of circuit $G_C=(\mathcal{C},E_C)$.
The vertex set of $G_C$ is $\mathcal{C}$.
Edge $(u,v)\in E_C$ iff circuit $C_u$ and $C_v$ have at least one common vertex.
Based the fact that we have stated in this section, the four-vertex model on $G$ can be reduced to a spin system with local constraint function on $G_C$.
 
The configuration of the spin system is one of the $2^m$ possible assignments $\sigma :\mathcal{C}\to \{0,1\}$ of states to vertices.
Label the local constraint function $f^C_{(u,v)}$ on each $(u,v)\in E_C$ which has the constraint matrix $M(f^C_{(u,v)})=
\left[ {\begin{array}{*{20}{c}}
    \beta^{A(u,v)}&\beta^{D(u,v)}\\
    \beta^{D(u,v)}&\beta^{A(u,v)}
\end{array}} \right]$.
The partition function of the four-vertex model can be written as
\begin{align*}
    Z_{4V}
    &=\sum_{\sigma :\mathcal{C}\to \{0,1\}}\prod_{\substack{(u,v)\in E_C\\\sigma(u)=\sigma(v)}}\beta^{A(u,v)}\prod_{\substack{(u,v)\in E_C\\\sigma(u)\neq\sigma(v)}}\beta^{D(u,v)}\\
    &=\prod_{(u,v)\in E_C}\beta^{D(u,v)}\sum_{\sigma :\mathcal{C}\to \{0,1\}}\prod_{\substack{(u,v)\in E_C\\\sigma(u)=\sigma(v)}}\beta^{A(u,v)-D(u,v)}.
\end{align*}

Ignoring the factor $\prod_{(u,v)\in E_C}\beta^{D(u,v)}$ which can be computed directly, our goal is to compute the partition function
\begin{equation}\label{eq:4v_as_ising}
Z_{4V}=\sum\limits_{\sigma :\mathcal{C}\to \{0,1\}}\prod\limits_{\substack{(u,v)\in E_C\\\sigma(u)=\sigma(v)}}\beta^{A(u,v)-D(u,v)}.
\end{equation}
It is evident that this conversion transforms the four-vertex model on $G$ to the Ising model on $G_C$. 
However, since the relative magnitudes of $A(u,v)$ and $D(u,v)$ are uncontrollable, the interaction between any two vertices in $G_C$ could be either ferromagnetic or antiferromagnetic, making the problem hard to tackle.
Indeed, as previously noted, $Z_{4V}$ is generally NP-hard to approximate on $4$-regular graphs.

\begin{theorem}[\cite{CaiLLY20/Eight}]
There can be no FPRAS for the four-vertex model with $a\neq c$ unless RP = NP.
\end{theorem}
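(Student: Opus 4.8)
The plan is to run the reduction in \eqref{eq:4v_as_ising} in reverse. The excerpt already shows that $Z_{4V}$ on a labeled $4$-regular graph $G$ equals, up to the explicitly computable prefactor $\prod_{(u,v)\in E_C}\beta^{D(u,v)}$, the partition function of an Ising model on the circuit graph $G_C$ with monochromatic edge weight $\beta^{A(u,v)-D(u,v)}$, so it suffices to embed a family of NP-hard Ising instances into this picture. Because $a\neq c$ forces $\beta\neq 1$, an edge whose agree/disagree balance satisfies $A(u,v)-D(u,v)>0$ when $\beta<1$ (or $A(u,v)-D(u,v)<0$ when $\beta>1$) carries monochromatic weight strictly less than $1$, i.e.\ an antiferromagnetic interaction. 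The whole difficulty is therefore to realize prescribed antiferromagnetic interactions on a prescribed interaction graph as the circuit graph of an honest $4$-regular instance.

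Concretely, I would reduce from \textsc{Max-Cut}. Given an input graph $H=(V_H,E_H)$ whose maximum cut we wish to determine, the goal is to build a $4$-regular graph $G$ together with an edge labeling whose circuit decomposition yields $G_C\cong H$ and, on every edge of $G_C$, a fixed antiferromagnetic weight $\lambda=\beta^{\pm d}<1$. For such an instance, pulling out $\lambda^{|E_H|}$ from \eqref{eq:4v_as_ising} gives $Z_{4V}\propto \sum_{\sigma}\lambda^{-\mathrm{cut}_H(\sigma)}$, a sum dominated by the maximum-cut assignments. Choosing $d=\Theta(|V_H|)$ (by letting the two circuits meeting at an edge of $G_C$ share that many agree-vertices, or equivalently by a polynomial blow-up of the gadget) makes the contribution of the maximum-cut assignments dominate that of all other assignments, so that a single call to a hypothetical FPRAS with constant $\varepsilon$ would, with probability at least $3/4$, pin down the estimate tightly enough to read off the exact value of the maximum cut of $H$.

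The combinatorial crux — and the step I expect to be the main obstacle — is the gadget construction that turns each vertex and edge of $H$ into local pieces of a $4$-regular graph so that (i) the trails traced by the circuit-decomposition rule close up into exactly $|V_H|$ circuits, one per vertex of $H$; (ii) two circuits share a common $U_V$-vertex precisely when the corresponding vertices are adjacent in $H$; and (iii) each such shared vertex is of the agree or disagree type, in the right multiplicity, to produce the target weight $\beta^{\pm d}$. One must check that the required pairing of the four edges at every vertex is globally consistent with a valid labeling, so that the circuit structure is the intended one, and that the construction stays polynomial in $|H|$. Controlling $A(u,v)$ and $D(u,v)$ independently while keeping $G$ $4$-regular is the delicate part, and is where the gadget bookkeeping concentrates.

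Finally, assembling the pieces: computing the maximum cut of $H$ (equivalently, deciding whether it meets a given threshold) is NP-hard, yet the construction above shows that an FPRAS for $Z_{4V}$ with $a\neq c$ would decide it in randomized polynomial time with bounded error, placing an NP-complete problem in RP and hence forcing $\mathrm{RP}=\mathrm{NP}$. As an alternative to the \textsc{Max-Cut} route, one could instead invoke the NP-hardness of approximating the antiferromagnetic Ising partition function recalled earlier (in contrast to the ferromagnetic case of \cite{Jerrum/Sinclair/1993/Approximating_Ising}) and transport a hard Ising instance through the same gadgets; the low-temperature six-vertex hardness of \cite{J.Cai/T.Liu/P.Lu/2019/Approximability_Six_Vertex} likewise applies once the degenerate coordinate $b=0$ is handled, which is exactly the content of the cited \cite{CaiLLY20/Eight}.
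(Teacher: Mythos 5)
The paper does not actually prove this theorem: it is imported wholesale from \cite{CaiLLY20/Eight} (the statement's attribution is the whole of its ``proof'' here), where the degenerate coordinate $b=0$ of the low-temperature six-vertex hardness of \cite{J.Cai/T.Liu/P.Lu/2019/Approximability_Six_Vertex} is handled within the eight-vertex framework. So there is no internal argument to match your proposal against, and your route --- running the paper's own circuit decomposition \cref{eq:4v_as_ising} in reverse to plant a hard antiferromagnetic Ising/\textsc{Max-Cut} instance inside a four-vertex instance --- is a genuinely different, self-contained alternative. Its appeal is that it reuses machinery already in the paper; what the citation buys instead is uniform coverage of all $a\neq c$ without any gadget verification.

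That said, your write-up has one genuine gap, which you flag yourself but leave unresolved: the gadget realizing a prescribed graph $H$ as the circuit graph $G_C$ with controlled crossing types. It is closable, and more easily than you suggest. Realizability: take one closed trail per vertex of $H$, declare $d$ crossing points between the trails of $u$ and $v$ for each edge $(u,v)\in E_H$, and let $G$ be the $4$-regular multigraph whose vertices are the crossings and whose edges join consecutive crossings along each trail; labeling one trail onto the pair $(x_1,x_4)$ and the other onto $(x_2,x_3)$ at every crossing makes the decomposition rule return exactly these circuits. Type control: on the support of $f^*$ the weight is $\beta$ iff $x_1=x_2$ and $1$ iff $x_1\neq x_2$, so transposing the labels $x_2$ and $x_3$ at a single vertex flips that vertex between agree and disagree type; since the labeling $\pi$ is part of the Holant input and is chosen per vertex, you can set every crossing's type independently, and there is no parity obstruction because edge values alternate at every step of a circuit and every circuit in the bipartite edge-vertex incidence graph has even length. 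One further point you should make explicit: changing a circuit's initial edge flips \emph{all} of its crossing types simultaneously, i.e., it is the gauge transformation $\sigma(u)\mapsto 1-\sigma(u)$, under which the Ising partition function is invariant --- so your target ``all edges antiferromagnetic'' need only be achieved up to gauge, which \textsc{Max-Cut} hardness tolerates. With all $d$ crossings per $H$-edge of disagree type and $\beta>1$, the monochromatic weight is $\beta^{-d}$; choosing $d>(|V_H|+1)\ln 2/\ln\beta$ (polynomial size, since $\beta$ is a fixed constant of the problem) makes a single constant-$\varepsilon$ FPRAS call determine the maximum cut, completing the contradiction with $\mathrm{RP}\neq\mathrm{NP}$. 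With these pieces written out, your proposal is a correct proof; as submitted, the central construction is asserted rather than established.
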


Despite the problem is hard in the worst case, we can identify situations in which an FPRAS exists. 
Specifically, an FPRAS is available if the graph of circuit $G_C$ can be transformed into an instance with only ferro-Ising types of vertex interactions.

Although we assumed that the initial edges of circuits were fixed in the above analysis, we can resize $A(u,v)$ and $D(u,v)$ by changing the initial edges of some circuits.
Specifically, if we change exactly one circuit's initial edge to an edge adjacent to it for any $u, v \in \mathcal{C}$, the agree-vertices (disagree-vertices) between $u$ and $v$ will become disagree-vertices (agree-vertices), respectively.
If we can change some initial edges so that $A(u,v) \ge D(u,v)$ for all $u, v$ when $\beta > 1$, \cref{eq:4v_as_ising} becomes a ferro-Ising type of computation.
The case for $\beta < 1$ is similar.
We then express the conditions as systems of linear equations over GF($2$).

Let $X_i$ indicate whether $C_i$ changes its initial edge, i.e., $X_i=1$ meaning $C_i$ change its initial edge and $X_i=0$ meaning $C_i$ does not change its initial edge.
For $\beta>1$, we can write down the following system of linear equations over GF($2$).
\begin{align}\label{equ: beta>1 equations}
    X_u\oplus X_v={\bf I}(A(u,v)<D(u,v)), \quad \forall (u,v)\in E_C,
\end{align}
where $\oplus$ is the XOR operator and ${\bf I}(\cdot)$ is the indicator function.
Similarly for $\beta<1$, we can write down the following system of linear equations.
\begin{align}\label{equ: beta<1 equations}
    X_u\oplus X_v={\bf I}(A(u,v)>D(u,v)), \quad \forall (u,v)\in E_C.
\end{align}

It is worth noting that both systems are relatively sparse, with $E_c$ equations but each equation only having two variables. 
As a result, we can solve them in at most $O(m^2|E_C|)$ time by using Gaussian elimination, which is polynomial.
In \cref{worm process}, we prove the following theorem by analyze the worm process.
\begin{theorem}\label{four-vertex FPRAS}
    If \cref{equ: beta>1 equations} or \cref{equ: beta<1 equations}  has a solution, there is an FPRAS for the $Z_{4V}(G;\beta,1)$ with $\beta>1$ or $\beta<1$ respectively.
\end{theorem}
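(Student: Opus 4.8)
The plan is to convert a solution of the GF($2$) system into a relabeling of the circuits that turns \cref{eq:4v_as_ising} into a \emph{ferromagnetic} Ising partition function, and then to run the worm process on this Ising instance, bounding its mixing time by the congestion method of \cref{thm: cogistion and mixing time}.

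First I would make precise the effect of the relabeling. Setting $X_i=1$ flips the initial edge of $C_i$, which swaps $A(u,v)$ with $D(u,v)$ on exactly those pairs $(u,v)$ with $X_u\oplus X_v=1$ and leaves them unchanged otherwise. Assuming $\beta>1$ and that \cref{equ: beta>1 equations} has a solution $(X_i)$, a one-line case check (on whether $A(u,v)<D(u,v)$ or not) shows that after relabeling the new counts satisfy $A'(u,v)\ge D'(u,v)$ for every edge of $G_C$; symmetrically, a solution of \cref{equ: beta<1 equations} for $\beta<1$ yields $A'(u,v)\le D'(u,v)$. In both cases every effective interaction $\beta^{A'(u,v)-D'(u,v)}=\beta^{|A(u,v)-D(u,v)|}$ is $\ge 1$, so \cref{eq:4v_as_ising} (times the precomputable prefactor, recomputed for the relabeled counts) is exactly the partition function of a ferromagnetic Ising model on $G_C$ with per-edge weight $\beta_e=\beta^{|A(e)-D(e)|}\ge 1$. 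It therefore suffices to give an FPRAS for this ferromagnetic Ising instance.

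Next I would pass to the even-subgraphs (high-temperature) representation. Writing each ferromagnetic coupling as $J_e=\tfrac12\ln\beta_e\ge 0$, the standard expansion rewrites $Z_{Ising}$ as a constant times $\sum_{S}\prod_{e\in S}\lambda_e$, summed over even subgraphs $S\subseteq E_C$, with worm weights $\lambda_e=\tanh J_e=\frac{\beta_e-1}{\beta_e+1}\in[0,1)$. The crucial point is that these $\lambda_e$ stay bounded even when $\beta_e$ is exponentially large, so no numerical blow-up occurs. The worm process is then the Markov chain on subgraphs with at most two odd-degree vertices, whose single-edge add/delete moves have a stationary distribution that restricts, on even subgraphs, to the target distribution proportional to $\prod_{e\in S}\lambda_e$; I would verify irreducibility, aperiodicity, and that it samples the desired distribution.

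Finally I would bound the mixing time by canonical paths and \cref{thm: cogistion and mixing time}. For a source $x$ and a target $y\in\Psi$ I would route along the symmetric difference $x\oplus y$, decomposing it into paths and cycles and ``unwinding'' each by single-edge moves in a fixed canonical order, exactly as in the Jerrum--Sinclair subgraphs-world chain; an encoding argument then bounds the number of canonical paths crossing any transition and yields $\varrho(\Psi;\Gamma)=\mathrm{poly}(|E_C|)$. Rapid mixing produces an almost-uniform sampler for the even-subgraph distribution, and a telescoping/annealing estimator that turns the edges on one at a time (each ratio estimated from samples) converts this into an FPRAS for $Z_{Ising}$, and hence for $Z_{4V}$. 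The main obstacle is the congestion bound: because $f^*$ is unwindable (\cref{unwindable}), McQuillan's machinery supplies no canonical paths directly, so the reduction to ferromagnetic Ising is precisely what lets us borrow the worm process's low-congestion paths, and the real work is checking that the congestion remains polynomial uniformly in the possibly huge weights $\beta_e$.
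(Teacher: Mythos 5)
Your proposal follows essentially the same route as the paper's proof: use the GF($2$) solution to relabel initial edges so that every effective interaction $\beta_e=\beta^{A'(u,v)-D'(u,v)}\ge 1$, pass to the even-subgraph representation with weights $x_e=(\beta_e-1)/(\beta_e+1)$, and run the worm process on $\Omega_0\cup\Omega_2$, bounding its congestion by routing along the cycle (plus one path) decomposition of $I\oplus F$ with an injective encoding --- exactly the content of \cref{lem: bound worm measure} and \cref{lem: bound cogistion} combined with \cref{thm: cogistion and mixing time}. One cosmetic slip: for $\beta<1$ the post-relabeling weight is $\beta^{-|A(u,v)-D(u,v)|}$ rather than $\beta^{|A(u,v)-D(u,v)|}$, though your stated conclusion that it is $\ge 1$ is the correct and relevant one.
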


\section{Worm process for the four-vertex model}\label{worm process}

In this section, we analyze the worm process for the four-vertex model with $\beta>1$ and \cref{equ: beta>1 equations} having a solution.
This also applies to the analysis of $\beta<1$ and \cref{equ: beta<1 equations} having a solution.
It gives an FPRAS for the partition function.
The worm process, introduced by Prikof'ev and Svistunov \cite{Prokof/Boris/2001/Worm_Algorithm}, is a Markov chain that transitions between even subgraphs and near-even subgraphs.
The worm process of ferromagnetic Ising model has been proven to mix rapidly \cite{Collevecchio/Garoni/Hyndman/Tokarev/2016/Worm_Raipd_Mixing}.

After changing circuits' initial edges according to the solution, let $\beta_e=\beta^{A(u,v)-D(u,v)}$ for all $e=(u,v)\in E_C$.
We can rewrite the partition function of the four-vertex model as
\begin{equation}\label{four vertex Ising}
    Z_{4V}=\sum_{\sigma :\mathcal{C}\to \{0,1\}}\prod_{(u,v)\in E_C}\beta_e ^{{\bf I}(\sigma(u)=\sigma(v))}.
\end{equation}

Let $\Omega_k$ be the set of subgraphs of $G_C$ where exactly $k$ many vertices have odd degrees.
The state space of the worm process is $\Omega_{worm}\coloneqq\Omega_0\cup\Omega_2$.
There is a famous equivalence between \cref{four vertex Ising} and the partition function of the even subgraph model which can be explained via a holographic transformation (e.g., see \cite{Guo/Jerrum/2017/Cluster_Raipd_Mixing}).
That is
\begin{equation*}
    Z_{4V} = 2^{|\mathcal{B}|}\prod_{e\in E_C}\frac{\beta_e +1}{2}\sum_{S\in \Omega_0}\prod_{e\in S}\frac{\beta_e - 1}{\beta_e + 1}.
\end{equation*}
Let $x_e\coloneqq(\beta_e - 1)/(\beta_e + 1)$. 
For any $S\subseteq E_C$, let $w(S)\coloneqq\prod_{e\in S}x_e$ and $Z_k\coloneqq\sum_{S\in\Omega_k}w(S)$.
In the view of holographic transformation, it is known that $Z_k\le\binom{m}{k}Z_0$ (again, see \cite{Guo/Jerrum/2017/Cluster_Raipd_Mixing}).
The weight of a subset $S\subseteq E_C$ of the worm process is defined as $w_{worm}(S)=\xi (S)w(S)$ where
\begin{eqnarray*}
    \xi(S)\coloneqq 
    \begin{cases}
    m &\text{if }S\in\Omega_0\\
    2 &\text{if }S\in\Omega_2\\
    0 &\text{otherwise}
    \end{cases}.
\end{eqnarray*}
Let $Z_{worm}\coloneqq\sum_{S\in\Omega_{worm}}w_{worm}(S)=m Z_0+2Z_2$.
The worm measure is defined as $\pi_{worm}(S)\coloneqq \frac{w_{worm}(S)}{Z_{worm}}$.
To apply \cref{thm: cogistion and mixing time}, we use the following lemma to bound the worm measure.

\begin{lemma}\label{lem: bound worm measure}
    For all $S\in\Omega_{worm}$, we have
    \begin{equation*}
        w_{worm}(S)\geq \frac{1}{2}\left(\frac{x_{min}}{2}\right)^{|E_C|}
    \end{equation*}
    where $x_{\min}=\min_{e\in E_C}\{x_e\}$.
\end{lemma}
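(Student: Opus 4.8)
The plan is to lower bound the two factors of $w_{worm}(S)=\xi(S)\,w(S)$ separately, exploiting that after applying a solution of \cref{equ: beta>1 equations} the reduction has produced a \emph{ferromagnetic} Ising instance.

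First I would record the range of the edge parameters. By the choice of the solution, every edge $e=(u,v)\in E_C$ has $A(u,v)\ge D(u,v)$ after the initial edges are relabeled, so $\beta_e=\beta^{A(u,v)-D(u,v)}\ge 1$ (recall $\beta>1$), and hence $x_e=(\beta_e-1)/(\beta_e+1)\in[0,1)$. In particular $0\le x_{\min}\le x_e<1$ for every $e$. The analogous statement for the $\beta<1$ case follows identically from \cref{equ: beta<1 equations}, which forces $A(u,v)\le D(u,v)$ and therefore $\beta_e\ge 1$ again.

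Next I would bound each factor. For the combinatorial weight $\xi$, if $m=1$ then $E_C=\emptyset$, the only state is the empty subgraph, and the claim reads $m\ge\tfrac12$, which is immediate; so assume $m\ge 2$, whence $\xi(S)\in\{m,2\}$ gives $\xi(S)\ge 2$ for every $S\in\Omega_{worm}$. For the multiplicative weight, since $0\le x_e\le 1$ the product $w(S)=\prod_{e\in S}x_e$ can only decrease when more factors are included, so $w(S)\ge\prod_{e\in E_C}x_e\ge x_{\min}^{|E_C|}$. Combining the two bounds yields $w_{worm}(S)\ge 2\,x_{\min}^{|E_C|}$, and since $2^{|E_C|+2}\ge 1$ this is at least $\tfrac12(x_{\min}/2)^{|E_C|}$, as claimed.

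There is essentially no deep obstacle here: the argument is a monotonicity observation together with the sign information coming from the reduction. The only points requiring care are (i) verifying $x_e\le 1$, which is what lets me replace $S$ by all of $E_C$ so that restricting to a subset can only increase the product, and (ii) the degenerate cases $m=1$ and $x_{\min}=0$ (the latter arising when some $\beta_e=1$), in which the stated bound holds trivially since $w_{worm}(S)\ge 0$. I would also note that the proof in fact gives the stronger bound $w_{worm}(S)\ge 2\,x_{\min}^{|E_C|}$; the weaker form in the statement is presumably retained because the extra $2^{-|E_C|}$ slack simplifies the subsequent estimate of $\pi_{worm}(x_0)$ when \cref{thm: cogistion and mixing time} is applied.
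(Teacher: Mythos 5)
Your argument is internally sound for the inequality as literally printed, and its first half coincides with the paper's: both proofs use $\beta_e\ge 1$ (hence $x_e\in[0,1)$) to get $w(S)=\prod_{e\in S}x_e\ge x_{\min}^{|E_C|}$, and $\xi(S)\ge 2$, yielding $w_{worm}(S)\ge 2x_{\min}^{|E_C|}$. But the suspicious slack you noticed at the end --- weakening $2x_{\min}^{|E_C|}$ by a factor of $2^{|E_C|+2}$ and rationalizing it as a convenience for later estimates --- is the tell-tale sign that you proved the wrong statement. The lemma as printed contains a typo: the paper's own proof concludes with $\pi_{worm}(S)\ge\frac12\left(\frac{x_{\min}}{2}\right)^{|E_C|}$, i.e.\ the bound is on the \emph{normalized stationary measure} $\pi_{worm}(S)=w_{worm}(S)/Z_{worm}$, not on the raw weight $w_{worm}(S)$. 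This is also forced by how the lemma is used downstream: \cref{thm: cogistion and mixing time} needs an upper bound on $\ln\frac{1}{\pi(x_0)}$, and a lower bound on an unnormalized weight carries no information about $\pi(x_0)$, so your version of the lemma could not feed the mixing-time theorem at all.

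The substantive content your proposal is missing is the upper bound on the normalizing constant. The paper bounds $Z_{worm}=mZ_0+2Z_2\le m^2Z_0$ using the comparison $Z_k\le\binom{m}{k}Z_0$ (quoted from the holographic-transformation literature), and then $Z_0\le|\Omega_0|=2^{|E_C|-m+1}$, using again $x_e\le 1$ together with the fact that a connected graph with $m$ vertices and $|E_C|$ edges has exactly $2^{|E_C|-m+1}$ even subgraphs. Dividing the shared bound $w_{worm}(S)\ge 2x_{\min}^{|E_C|}$ by $m^2 2^{|E_C|-m+1}$ gives $\pi_{worm}(S)\ge\frac12\left(\frac{x_{\min}}{2}\right)^{|E_C|}\cdot\frac{2^{m+1}}{m^2}\ge\frac12\left(\frac{x_{\min}}{2}\right)^{|E_C|}$, since $2^{m+1}\ge m^2$ for all $m\ge 1$. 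In other words, the factor $2^{-|E_C|}$ you dismissed as slack is exactly what absorbs the $2^{|E_C|}$-scale count of even subgraphs in $Z_{worm}$. So while nothing you wrote is false, your proof does not capture the lemma's actual content: without the step $Z_{worm}\le m^2 2^{|E_C|-m+1}$, the rapid-mixing conclusion that follows does not go through.
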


\begin{proof}
    By the definition of $Z_{worm}$ and $Z_k\le\binom{m}{k}Z_0$, it follows that
    \begin{align*}
		Z_{worm}
        &= m Z_0\left(1+\frac{2}{m}\frac{Z_2}{Z_0}\right) \\
		&\le m^2 Z_0\\
        &\le m^2 |\Omega_0|\\
        &= m^2 2^{|E_C|-m+1}.
	\end{align*}
    In the last equation, we have used the fact that any finite connected graph with $m$ vertices and $|E_C|$ edges has $2^{|E_C|-m+1}$ even subgraphs.
    
    Since $w_{worm}(S)\geq 2w(S)\geq 2x_{\min}^{|E_C|}$ holds for any $S\in\Omega_{worm}$, we have
    \begin{align*}
		\pi_{worm}(S)
		&\geq \frac{2x_{\min}^{|E_C|}}{m^2 2^{|E_C|-m+1}}\\
        &=\frac{1}{2}\left(\frac{x_{\min}}{2}\right)^{|E_C|}\frac{2^{m+1}}{m^2}\\
        &\geq \frac{1}{2}\left(\frac{x_{\min}}{2}\right)^{|E_C|}.
	\end{align*}
\end{proof}

Now suppose that the current state is $A$, the transition strategy of the worm process is described as follows:
\begin{itemize}
    \item If $A\in \Omega_0$:
        \begin{enumerate}
            \item choose a vertex $v\in \mathcal{C}$ uniformly at random,
            \item choose a neighbor $v \sim u$ uniformly at random,
            \item propose $A\mapsto A\oplus (u,v)$.
        \end{enumerate}
    \item If $A\in \Omega_2$:
        \begin{enumerate}
            \item choose an odd vertex $v\in \partial A$ uniformly at random,
            \item choose a neighbor $v \sim u$ uniformly at random,
            \item propose $A\mapsto A\oplus (u,v)$.
        \end{enumerate}
\end{itemize}
Here $\partial A$ denotes the set of odd vertices of $A$.
It is easily to verity that $A\oplus uv$ always in $\Omega_{worm}$.
So these proposals are well-defined.
To make the transition matrix be symmetric, we choose an appropriate Metropolis acceptance rate to modify the transition probability.
Moreover, to ensure the eigenvalues of the transition matrix are strictly positive, we consider the lazy version of the chain, i.e., stay the current state w.p. $\frac{1}{2}$ and do the proposal we described w.p. $\frac{1}{2}$ at each step.
The resulting transition matrix is
\begin{eqnarray*}
    P(A,A\oplus uv)\coloneqq
    \begin{cases}
    x_{(u,v)}^{{\bf I}((u,v)\notin A)}\frac{1}{2m}(\frac{1}{d(u)}+\frac{1}{d(v)}) &\text{if }A\in\Omega_0\\
    x_{(u,v)}^{{\bf I}((u,v)\notin A)}\frac{1}{4}(\frac{1}{d(u)}+\frac{1}{d(v)}) &\text{if }A\oplus (u,v)\in\Omega_0\\
    \min\left(1,\frac{d(u)}{d(v)}x_{(u,v)}^{{\bf I}((u,v)\notin A)-{\bf I}((u,v)\in A)}\right) \frac{1}{4d(u)} &\text{if }A,A\oplus (u,v)\in\Omega_2, u\in\partial A
    \end{cases}
\end{eqnarray*}
and other non-diagonal entries of $P$ are $0$, diagonal entries equal to $1$ minus other entries in the same row.
The congestion of the worm process with above transition matrix is bounded by the following lemma, which is a standard canonical path argument.
\begin{lemma}\label{lem: bound cogistion}
    There exists a choice of paths $\Gamma = \{\gamma_{x,y}:(x,y)\in\Omega_{worm}\times\Omega_0\}$, such that
    \begin{equation*}
        \varrho(\Omega_0;\Gamma)\le m ^5 |E_C|.
    \end{equation*}
\end{lemma}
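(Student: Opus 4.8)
The plan is to verify the hypotheses of \cref{thm: cogistion and mixing time} with $\Psi=\Omega_0$ by constructing, for every pair $(x,y)\in\Omega_{worm}\times\Omega_0$, a canonical path $\gamma_{x,y}$ built from transitions of $P$, and bounding the resulting congestion by $m^5|E_C|$. I would adapt the canonical-path analysis of the ferromagnetic worm process in \cite{Collevecchio/Garoni/Hyndman/Tokarev/2016/Worm_Raipd_Mixing} to the edge weights $x_e=(\beta_e-1)/(\beta_e+1)$. The structural fact that makes the argument go through is that any solution of \cref{equ: beta>1 equations} forces $\beta_e=\beta^{A(u,v)-D(u,v)}\ge 1$, so each $x_e\in[0,1)$ and every weight $w(S)=\prod_{e\in S}x_e$ is nonnegative.

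First I would build the paths. Fix a linear order on the vertices and edges of $G_C$. For a pair $(x,y)$ put $D=x\oplus y$; since $\partial y=\emptyset$, the odd vertices of $D$ equal those of $x$, so $D$ has at most one pair of odd vertices and decomposes canonically into edge-disjoint cycles together with at most one path $Q$ joining the two odd vertices of $x$ (present exactly when $x\in\Omega_2$). The path $\gamma_{x,y}$ unwinds the components of $D$ in the fixed order, \emph{processing $Q$ last}: each component is traversed from its least vertex by moving the single worm head one edge at a time and toggling that edge's membership in the current subgraph. Unwinding a cycle begins and ends in $\Omega_0$ and passes through $\Omega_2$ in between, so every step is a legal move of $P$; each edge is toggled at most once, whence $L(\Gamma)\le|E_C|$.

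Next I would set up the encoding. For a transition $t=(z,z')$ toggling an edge $uv$, define $\eta_t(x,y)=x\oplus y\oplus z$. On the already-processed edges $\eta_t$ coincides with $x$, on the unprocessed edges with $y$, and off $D$ with $z=x=y$; since $t$ and the fixed order determine the processed/unprocessed split, one recovers $(x,y)$ from $(z,\eta_t)$, so $(x,y)\mapsto\eta_t(x,y)$ is injective for fixed $t$. Processing $Q$ last is exactly what keeps the encoding \emph{admissible}: while a cycle is unwound the odd vertices of $x$ cancel against those of the still-unprocessed $Q$, and while $Q$ is unwound all cycles are already done, so in both regimes $\partial\eta_t$ is precisely the two endpoints of the current unwound arc and $\eta_t\in\Omega_{worm}$ (and likewise $\eta_t\oplus uv=x\oplus y\oplus z'\in\Omega_{worm}$). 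From $x\oplus y=z\oplus\eta_t$ and multiplicativity of $w$ I get the exact identity
\[
w(x)\,w(y)=w(z)\,w(\eta_t),\qquad x\oplus y=z\oplus\eta_t,
\]
and the Metropolis valve $x_{uv}^{\mathbf{I}(uv\notin z)}$ in $P(z,z')$ is absorbed by passing from $\eta_t$ to $\eta_t\oplus uv$. Writing $\pi_{worm}(s)=\xi(s)w(s)/Z_{worm}$, the identity turns the per-transition sum $\frac{1}{\pi_{worm}(z)P(z,z')}\sum_{\gamma_{x,y}\ni t}\pi_{worm}(x)\pi_{worm}(y)$ into $\frac{\xi(x)\xi(y)}{\xi(z)}$ times the reciprocal of the degree normalization in $P$ times $\frac1{Z_{worm}}\sum_{\gamma_{x,y}\ni t}w(\eta_t\oplus uv)$. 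Here $\xi(y)=m$ and $\xi(z)\ge2$ give $\xi(x)\xi(y)/\xi(z)\le m^2/2$; the normalization $\frac1{2m}(\tfrac1{d(u)}+\tfrac1{d(v)})\ge m^{-2}$ (using $d\le m$) contributes at most $m^2$; admissibility together with $w\le w_{worm}/2$ and injectivity yields $\sum_{\gamma_{x,y}\ni t}w(\eta_t\oplus uv)\le\tfrac12 Z_{worm}$, a constant. Finally $\pi_{worm}(\Omega_0)=mZ_0/Z_{worm}\ge 1/m$ by $Z_{worm}\le m^2Z_0$ (from the proof of \cref{lem: bound worm measure}), so $1/\pi_{worm}(\Omega_0)\le m$, and with $L(\Gamma)\le|E_C|$ the congestion is at most $|E_C|\cdot m\cdot(m^2/2)\cdot m^2\cdot\tfrac12\le m^5|E_C|$.

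The step I expect to be the main obstacle, and the one I would write in full, is proving that the encoding stays inside $\Omega_{worm}$: the parity cancellation $\partial x\oplus\partial(\text{unprocessed part})=\{\text{arc endpoints}\}$ is precisely what the ``process $Q$ last'' ordering is engineered to guarantee, and without it $\eta_t$ could acquire four odd vertices, forcing a comparison against $Z_4\le\binom{m}{4}Z_0$ and inflating the bound past $m^5|E_C|$. The remaining work is the per-case Metropolis bookkeeping—in particular the $\Omega_2\to\Omega_2$ rule with the $\min$ in $P$, which is handled analogously and in fact carries a smaller degree factor of order $m$—and this is routine once admissibility of the encoding has been established.
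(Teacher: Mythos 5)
Your encoding analysis is sound, but the canonical paths themselves are not valid, and this is a genuine gap: by processing the odd path $Q$ \emph{last}, you force the intermediate states of $\gamma_{x,y}$ out of the state space. Concretely, take $x\in\Omega_2$ with $\partial x=\{u,v\}$ and suppose the first component processed is a cycle $B_1$ vertex-disjoint from $\{u,v\}$. After toggling the first edge $(p,q)$ of $B_1$, the current state $z$ satisfies $\partial z=\{u,v\}\oplus\{p,q\}$, i.e.\ $z$ has four odd vertices, so $z\notin\Omega_{worm}=\Omega_0\cup\Omega_2$ and $P$ assigns this move probability zero; indeed, from a state in $\Omega_2$ the worm process only toggles edges incident to a \emph{current} odd vertex, and the cycle edges are in general not incident to $u$ or $v$. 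Your assertion that ``unwinding a cycle begins and ends in $\Omega_0$'' holds only after $Q$ has been processed --- which is exactly why the correct ordering is the opposite of yours. The paper processes the odd path first (its $B_0$, a shortest $u$--$v$ path inside $I\oplus F$): the worm head then travels from $u$ to $v$ so that every intermediate state stays in $\Omega_2$, and afterwards each cycle is unwound through a legal $\Omega_0\to\Omega_2\to\cdots\to\Omega_0$ excursion.

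The trade-off you engineered the ordering to avoid is a non-issue. With the paper's ordering the path is legal, while the encoding $\eta(I,F)=I\oplus F\oplus(A\cup A')$ may land in $\Omega_4$ (when $I\in\Omega_2$ and a cycle is being unwound, the already-processed $B_0$ contributes the odd vertices $\{u,v\}$ on top of the current arc endpoints); the paper simply takes $\eta:\Omega_{worm}\times\Omega_0\to\Omega_{worm}\cup\Omega_4$ and absorbs the extra term via $Z_4\le\binom{m}{4}Z_0$, obtaining
\begin{equation*}
\varrho(\Omega_0;\Gamma_{worm})\le \frac{2m|E_C|}{Z_0}\bigl[m(Z_0+Z_2)+2(Z_0+Z_2+Z_4)\bigr]\le 2m|E_C|\left[(m+2)+(m+2)\binom{m}{2}+2\binom{m}{4}\right]\le m^5|E_C|,
\end{equation*}
so the $\Omega_4$ contribution costs only a factor of order $m^4$ inside a bound with slack and does not ``inflate the bound past $m^5|E_C|$'' as you feared. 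Your remaining per-transition bookkeeping (the identity $w(x)w(y)=w(z)w(\eta_t)$, the estimates $\xi(x)\xi(y)/\xi(z)\le m^2/2$, the degree normalization bounded below by $m^{-2}$ using $d\le m$, and $\pi_{worm}(\Omega_0)\ge 1/m$ from $Z_{worm}\le m^2Z_0$ as in the proof of \cref{lem: bound worm measure}) is correct and essentially matches the paper's computation; the fix is to revert the processing order and enlarge the encoding's range to $\Omega_{worm}\cup\Omega_4$, paying the harmless $Z_4$ term.
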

\begin{proof}
    Let $I\in\Omega_{worm},F\in\Omega_0$ be two configurations, denoting the initial and final states respectively.
    Then $I\oplus F\in\Omega_{worm}$ and $\partial (I\oplus F)=\partial I$.
    First, we suppose $I\in\Omega_2$ with $\partial I=\{u,v\}$.
    Fix one of the shortest path $B_0$ between $u,v$, then $I\oplus F\setminus B_0\in\Omega_0$.
    Thus, we can decompose $I\oplus F\setminus B_0$ into a set of edge-disjoint cycles.
    Order the resulting cycles and denote them by $B_1,B_2,\cdots,B_r$.
    Moreover, specify a distinguishable initial vertex for each $B_i,i\ge 0 $, and a direction for each $B_i,i\ge 1$.
    Let $\{e_1,e_2,\cdots,e_k\}$ be the edges of $\{B_0,B_1,\cdots,B_r\}$ taken in ordered.
    Now we have $I\oplus F = \cup_{i=0}^{r}B_i=\{e_1,e_2,\cdots,e_k\}$.
    If $I\in\Omega_0$, just let $B_0=\varnothing$.

    The canonical path $\gamma_{IF}$ from $I$ to $F$ is defined to be $Z_0=I, Z_i=Z_{i-1}\oplus e_i$ and $Z_k=F$.
    Let $\Gamma_{worm}=\{\gamma_{IF}:(I,F)\in\Omega_{worm}\times\Omega_0\}$ be the set of such canonical paths.
    It is clearly that $\Gamma_{worm}\le |E_C|$ since every edge in $|E_C|$ can be used at most once.
    
    For each transition $(A,A^\prime)$, we use a combinatorial encoding for all paths through $(A,A^\prime)$.
    For $I\in\Omega_{worm},F\in\Omega_0$, let $\eta(I,F)=I\oplus F\oplus(A\cup A^\prime) $.
    We claim that $\eta:\Omega_{worm}\times\Omega_0\to\Omega_{worm}\cup\Omega_4$ is an injection.
    Given a transition $(A,A^\prime)$ and $U=\eta(I,F)$, all edges not in $(A\cup A^\prime)\oplus U$ have the same state in $I$ and $F$ since $(A\cup A^\prime)\oplus U=I\oplus F$.
    Let $e=A\oplus A^\prime$.
    According to the construction of the canonical paths, there is an ordering for all edges in $I\oplus F$.
    For each edge before $e$, its state in $A\cup A^\prime$ has been changed to that in $F$, and in $U$ is still the same as that in $I$.
    For each edge after $e$, its state in $A\cup A^\prime$ has been changed to that in $I$, and in $U$ is still the same as that in $F$.
    Thus, we can recover the unique $(I,F)$ from $(A,A^\prime)$ and $U$.
    Moreover, if $I\in\Omega_0$, then $\eta(I,F)\in\Omega_{worm}$; if $I\in\Omega_2$, then $\eta(I,F)\in\Omega_{worm}\cup\Omega_4$.

    Since $I\cap F\subseteq A\cup A^\prime\subseteq I\cup F$ and $U=I\oplus F\oplus(A\cup A^\prime)$, it follows that $U\cap(A\cup A^\prime)=I\cap F$ and $U\cup(A\cup A^\prime)=I\cup F$.
    We have
    \begin{align*}
		\frac{1}{\pi_{worm}(\Omega_0)}\frac{\pi_{worm}(I)\pi_{worm}(F)}{\pi_{worm}(A)P(A,A^\prime)}
        &=\frac{1}{m Z_0} \frac{m\xi (I)w(I)w(F)}{\xi(A)w(A)P(A,A^\prime)}\\
		&\le\frac{2m}{Z_0}\frac{\xi(I)w(I)w(F)}{w(A\cup A^\prime)}\\
		&=\frac{2m}{Z_0}\xi(I)w(\eta(I,F)).
	\end{align*}
    The inequality is a consequence of the definition of the transition probability and $d(u)\le m$ for each $u\in \mathcal{C}$.
    Now let $(A,A^\prime)$ be a maximally congested transition.
    We have
    \begin{align*}
		\varrho(\Omega_0;\Gamma_{worm})
        &\le \frac{2m |E_C|}{Z_0}\sum_{(I,F)\in\Omega_{worm}\times\Omega_0,\gamma_{IF}\ni (A,A^\prime)}\xi(I)w(\eta(I,F))\\
		&\le \frac{2m |E_C|}{Z_0}[m(Z_0+Z_2)+2(Z_0+Z_2+Z_4)]\\
        &\le 2m |E_C| \left[(m+2)+(m+2)\binom{m}{2}+2\binom{m}{4} \right]\\
        &\le m ^5 |E_C|.
	\end{align*}
    The second inequality is because that $\eta$ is an injection.
\end{proof}

Combining \cref{thm: cogistion and mixing time}, \cref{lem: bound worm measure} and \cref{lem: bound cogistion}, we immediately have the following theorem.
Furthermore, as we mentioned in \cref{Approximation algorithm}, this theorem also gives a proof to \cref{four-vertex FPRAS}.
\begin{theorem}
    For the four-vertex model $Z_{4v}(G;\beta,1)$, when $\beta>1$ and \cref{equ: beta>1 equations} has a solution or $\beta<1$ and \cref{equ: beta<1 equations} has a solution, the mixing time of the worm process $\tau_\varepsilon(P)\le 4n^5|E_C|^2(\frac{\ln 2\varepsilon^{-1}}{|E_C|}+\ln\frac{2}{x_{\min}})$.
\end{theorem}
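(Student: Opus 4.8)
The plan is to obtain the bound as a mechanical synthesis of the three ingredients already assembled, so the only genuine content is to check the hypotheses of \cref{thm: cogistion and mixing time} and then substitute. First I would confirm that the lazy worm chain $P$ meets the standing assumptions of that theorem: aperiodicity is immediate from the $\tfrac12$ holding probability (which also forces the spectrum into $(0,1]$), and irreducibility follows because the proposals $A\mapsto A\oplus(u,v)$ keep every state inside $\Omega_{worm}$ and, as the canonical-path construction in \cref{lem: bound cogistion} exhibits, connect any $I\in\Omega_{worm}$ to any $F\in\Omega_0$ through states of $\Omega_{worm}$. I would also record that solving \cref{equ: beta>1 equations} (resp.\ \cref{equ: beta<1 equations}) is exactly what guarantees $A(u,v)\ge D(u,v)$ on every edge when $\beta>1$ (resp.\ the reverse when $\beta<1$), so that $\beta_e=\beta^{A(u,v)-D(u,v)}\ge 1$ and hence $x_e\in[0,1)$ for all $e$; this nonnegativity is precisely what legitimizes the weights $w_{worm}$ and the inequality $Z_k\le\binom{m}{k}Z_0$ used throughout, placing us in the ferromagnetic regime where the worm process is well behaved.

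Next I would apply \cref{thm: cogistion and mixing time} with $\Psi=\Omega_0$, the path family $\Gamma$ of \cref{lem: bound cogistion}, and an arbitrary initial state $x_0\in\Omega_{worm}$, and then plug in the two quantitative estimates. The congestion lemma supplies $\varrho(\Omega_0;\Gamma)\le m^5|E_C|$, while the proof of \cref{lem: bound worm measure} supplies the stationary-mass bound $\pi_{worm}(x_0)\ge\tfrac12(x_{\min}/2)^{|E_C|}$, which yields
\[
\ln\frac{1}{\pi_{worm}(x_0)}\le \ln 2 + |E_C|\ln\frac{2}{x_{\min}}.
\]
Substituting both into $\tau_\varepsilon(P)\le 4\varrho(\Omega_0;\Gamma)\bigl(\ln\tfrac{1}{\pi_{worm}(x_0)}+\ln\tfrac1\varepsilon\bigr)$ and merging $\ln 2+\ln\varepsilon^{-1}=\ln(2\varepsilon^{-1})$ gives
\[
\tau_\varepsilon(P)\le 4m^5|E_C|\Bigl(\ln(2\varepsilon^{-1})+|E_C|\ln\tfrac{2}{x_{\min}}\Bigr),
\]
and factoring one power of $|E_C|$ out of the bracket produces the claimed form $4m^5|E_C|^2\bigl(\tfrac{\ln 2\varepsilon^{-1}}{|E_C|}+\ln\tfrac{2}{x_{\min}}\bigr)$, with $m=|\mathcal C|$ playing the role of $n$ in the statement.

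The point I would stress is that essentially all the difficulty is discharged in the supporting lemmas, so the main obstacle lies not in this final combination but in \cref{lem: bound cogistion}: the delicate step there is verifying that the encoding $\eta(I,F)=I\oplus F\oplus(A\cup A^\prime)$ is an injection into $\Omega_{worm}\cup\Omega_4$, since this is what converts the sum over congested paths into the telescoped count $m(Z_0+Z_2)+2(Z_0+Z_2+Z_4)$ that $Z_k\le\binom{m}{k}Z_0$ then controls. One subtlety worth flagging is that the bound degenerates when some edge has $A(u,v)=D(u,v)$, forcing $\beta_e=1$, $x_e=0$, and $x_{\min}=0$; such edges carry zero weight and may simply be deleted from $G_C$ before running the chain, after which $x_{\min}>0$ and the stated bound is meaningful. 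Since \cref{four-vertex FPRAS} then follows by the standard mixing-time-to-sampling and sampling-to-counting reductions, this theorem completes the argument promised in \cref{Approximation algorithm}.
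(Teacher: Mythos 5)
Your proposal is correct and follows exactly the paper's route: the paper proves this theorem by directly combining \cref{thm: cogistion and mixing time} (with $\Psi=\Omega_0$), the stationary-measure lower bound of \cref{lem: bound worm measure}, and the congestion bound $\varrho(\Omega_0;\Gamma)\le m^5|E_C|$ of \cref{lem: bound cogistion}, and your substitution and factoring of $|E_C|$ reproduces the stated bound with $m=|\mathcal{C}|$ in the role of $n$. Your added checks (irreducibility/aperiodicity of the lazy chain, $\beta_e\ge 1$ from the solvability of the linear system, and the degenerate case $x_{\min}=0$ handled by deleting zero-weight edges) are sound housekeeping that the paper leaves implicit.
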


\section{On planar graphs}

According to \cite{J.Cai/Z.Fu/S.Shao/2021/Planar_Six_Vertex}, the constraint functions of the four-vertex model under the parameter setup $(a, 0, c)$ are matchgate signatures, and thus the exact computation of its partition function can be done in polynomial time on planar graphs.
However, in order to (at least approximately) sample from the state space, the traditional sampling-via-counting encounters an obstacle since in general the six-vertex model is not known to be self-reducible~\cite{Jerrum/Valiant/Vazirani/1986/counting_sampling}.

In this section, we focus on the sampling complexity of the four-vertex model.
We identify a canonical way of labeling local edges around each vertex in the planar four-vertex model, such that \cref{equ: beta>1 equations} can always be satisfied, and thus the worm process studied in \cref{worm process} yields a sampling algorithm for $\beta > 1$.
It is unlikely that a natural class of graphs would satisfy the algebraic criteria of having a solvable linear system.
Our discovery that the class of planar graphs, which includes the fundamental case in statistical physics (the square lattice), satisfies this criteria further emphasizes its significance.
\subsection{Canonical labeling}

Consider the six-vertex model on the planar graph. 
It is well known that the dual graph of any 4-regular plane graph $G$ is bipartite \cite{Welsh/1969/Matroids}.
Thus, the faces of $G$ have a 2-coloring, i.e., a way to color the faces by using two colors (black and white) such that any two adjacent faces have different colors.
Without loss of generality, we assume that the outer face of $G$ is colored white. 
See \cref{face 2-coloring} for an example.
\begin{figure}[htbp]
    \centering
    \includegraphics[width=6cm]{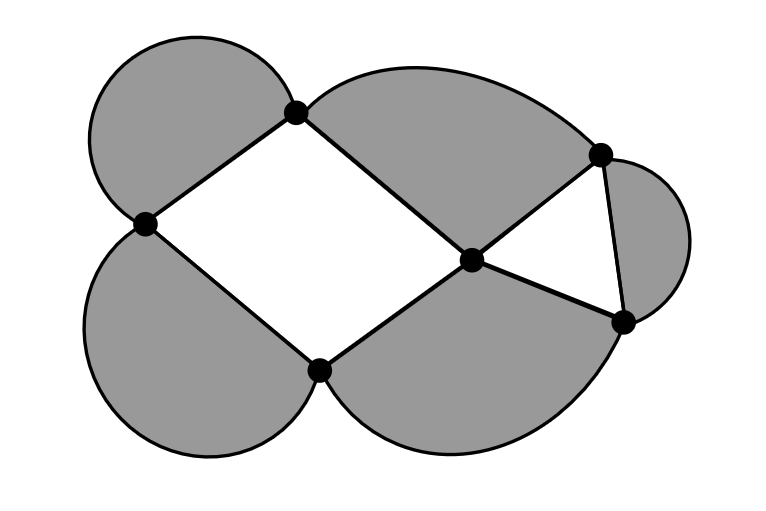}
    \caption{An example of face 2-coloring where the outer face is white.}
    \label{face 2-coloring}
\end{figure}

Based on the above facts, there is a canonical way of local edge labelings for the six-vertex model on a plane graph which divides the local orientations into three types:
\begin{itemize}
\item
The edges that clamp each black face are one-in-one-out, and the directions of the arrow-flows along two black faces are different. These vertices (1, 2 in \cref{planar six-type}) have weight $a$.  
\item
The edges which clamp each black face are two-in or two-out. These vertices (3, 4 in \cref{planar six-type}) have weight $b$.
\item
The edges which clamp each black face are one-in-one-out, and the directions of the arrow-flows along two black faces are the same. These vertices (5, 6 in \cref{planar six-type}) have weight $c$.
\end{itemize}
\begin{figure}[H]
    \begin{subfigure}[t]{0.15\textwidth}
        \centering
        \includegraphics[width=1\textwidth]{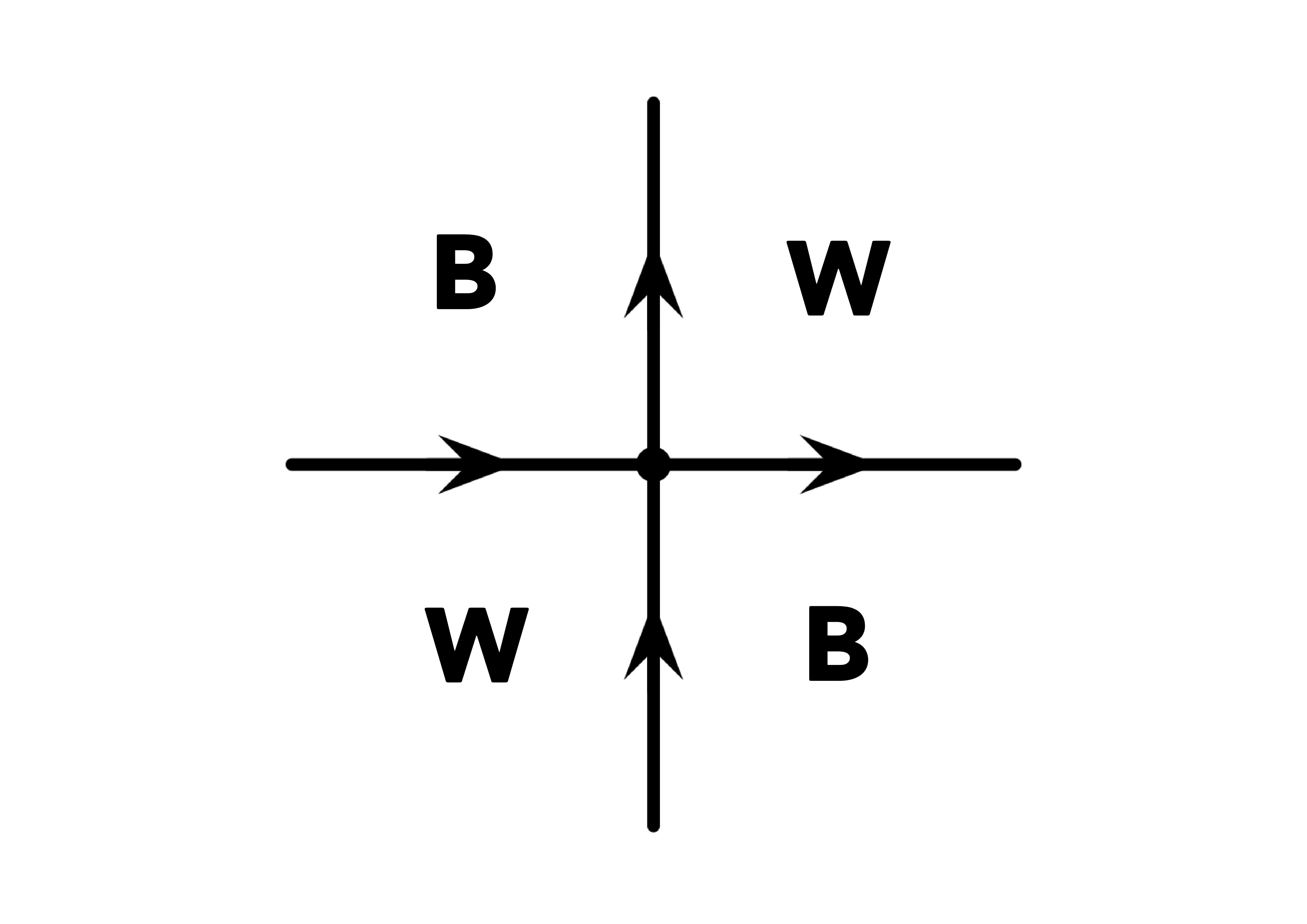}
        \subcaption*{1}
    \end{subfigure}
    \begin{subfigure}[t]{0.15\textwidth}
        \centering
        \includegraphics[width=1\textwidth]{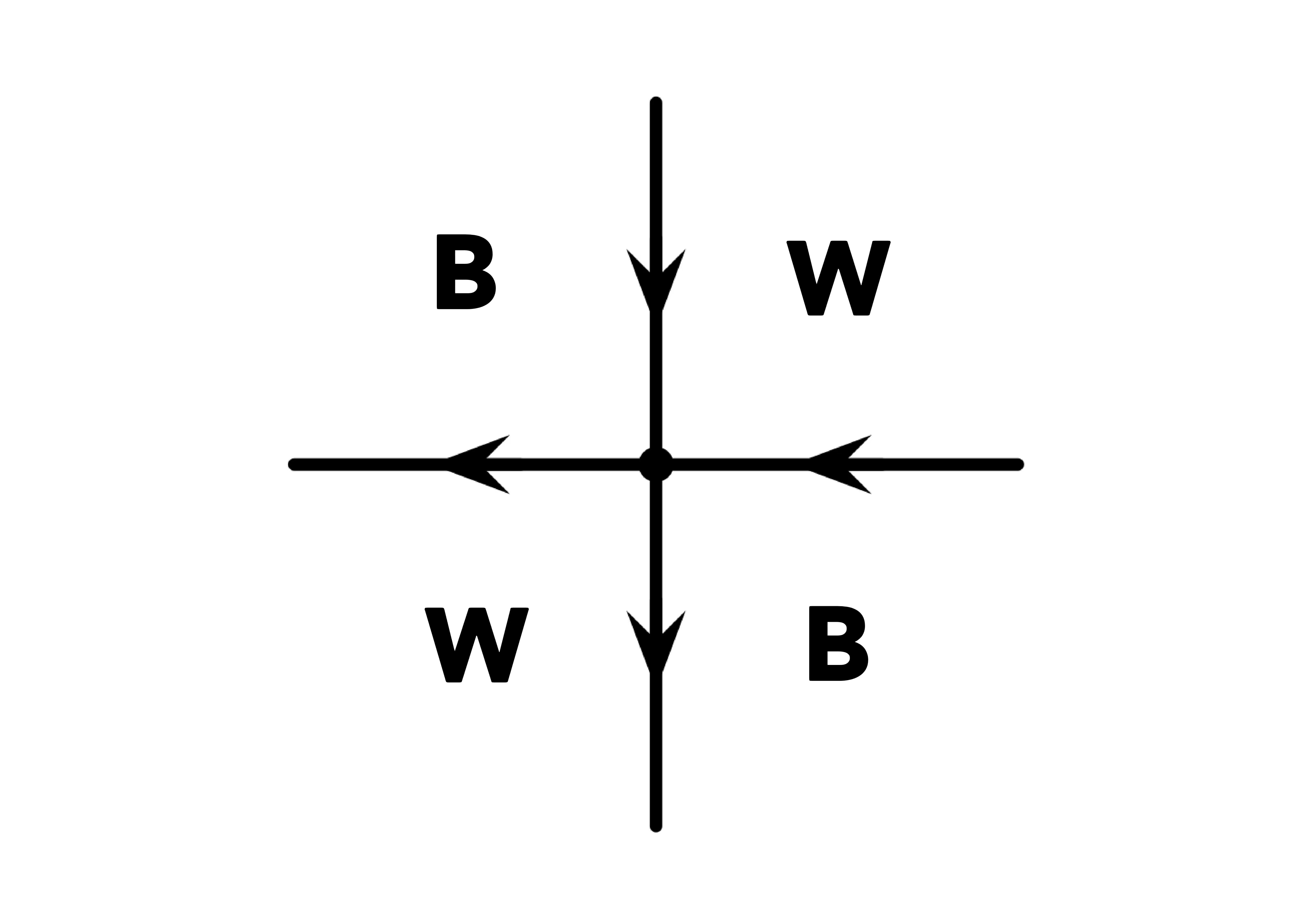}
        \subcaption*{2}
    \end{subfigure}
    \begin{subfigure}[t]{0.15\textwidth}
        \centering
        \includegraphics[width=1\textwidth]{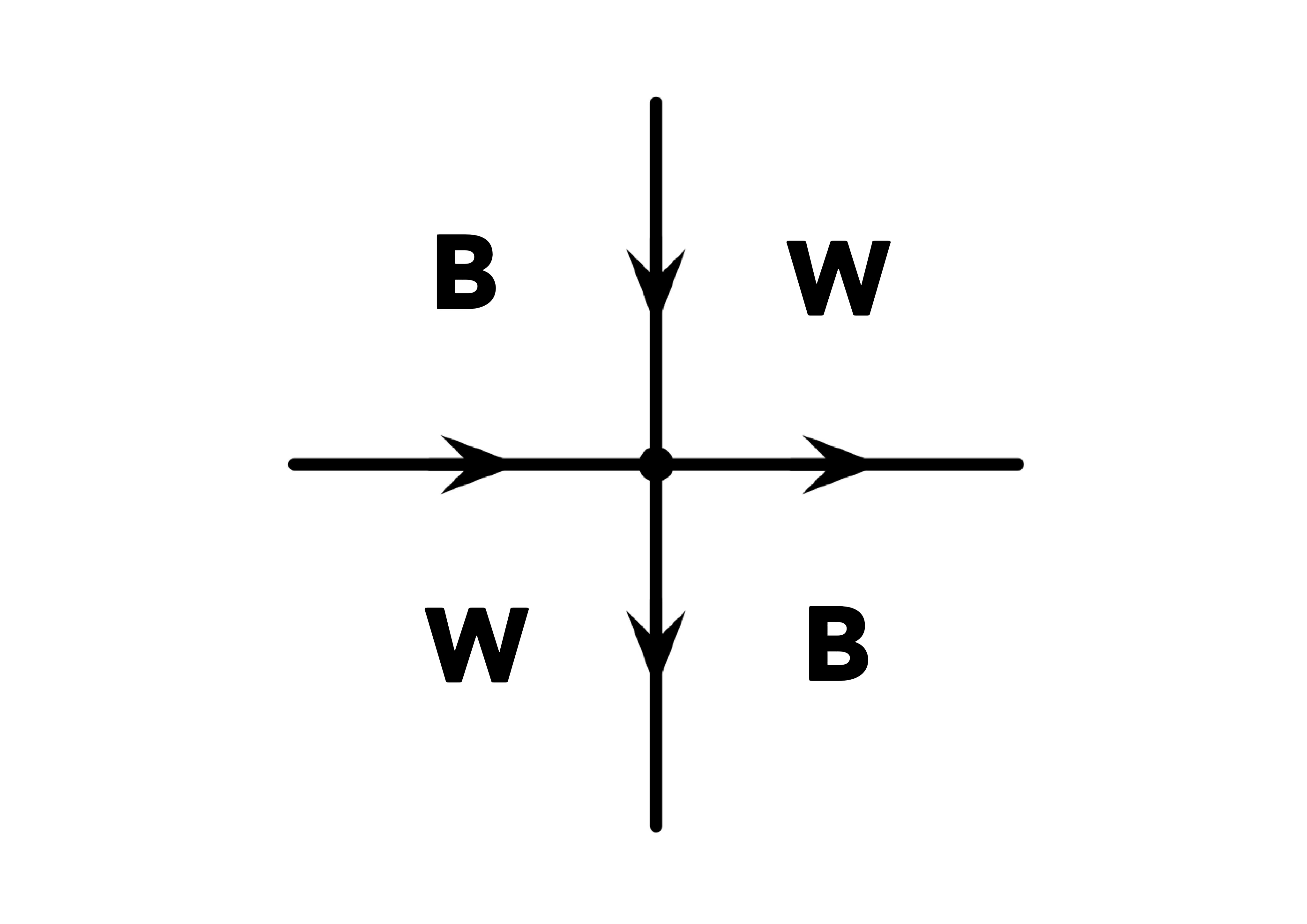}
        \subcaption*{3}
    \end{subfigure}
    \begin{subfigure}[t]{0.15\textwidth}
        \centering
        \includegraphics[width=1\textwidth]{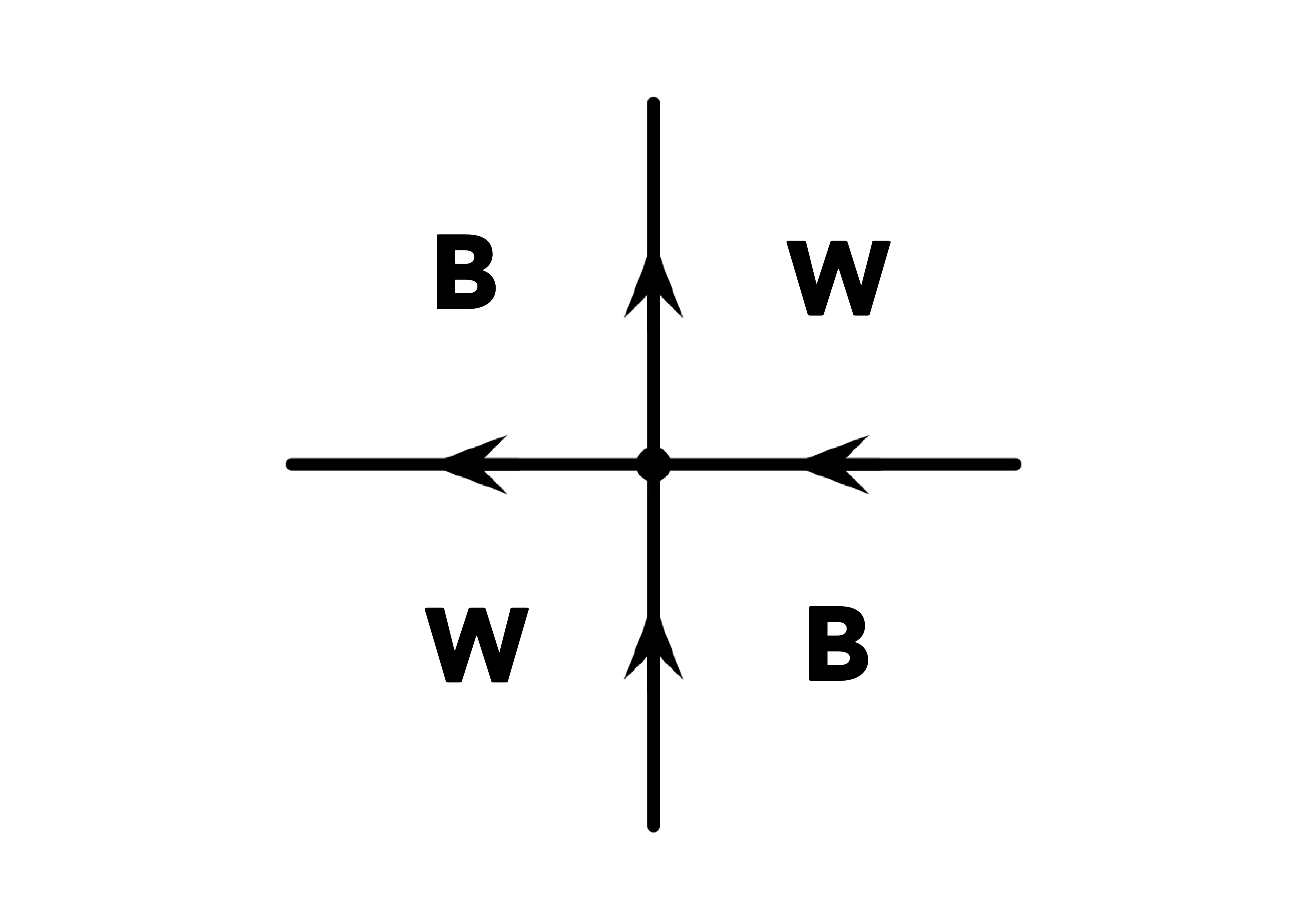}
        \subcaption*{4}
    \end{subfigure}
    \begin{subfigure}[t]{0.15\textwidth}
        \centering
        \includegraphics[width=1\textwidth]{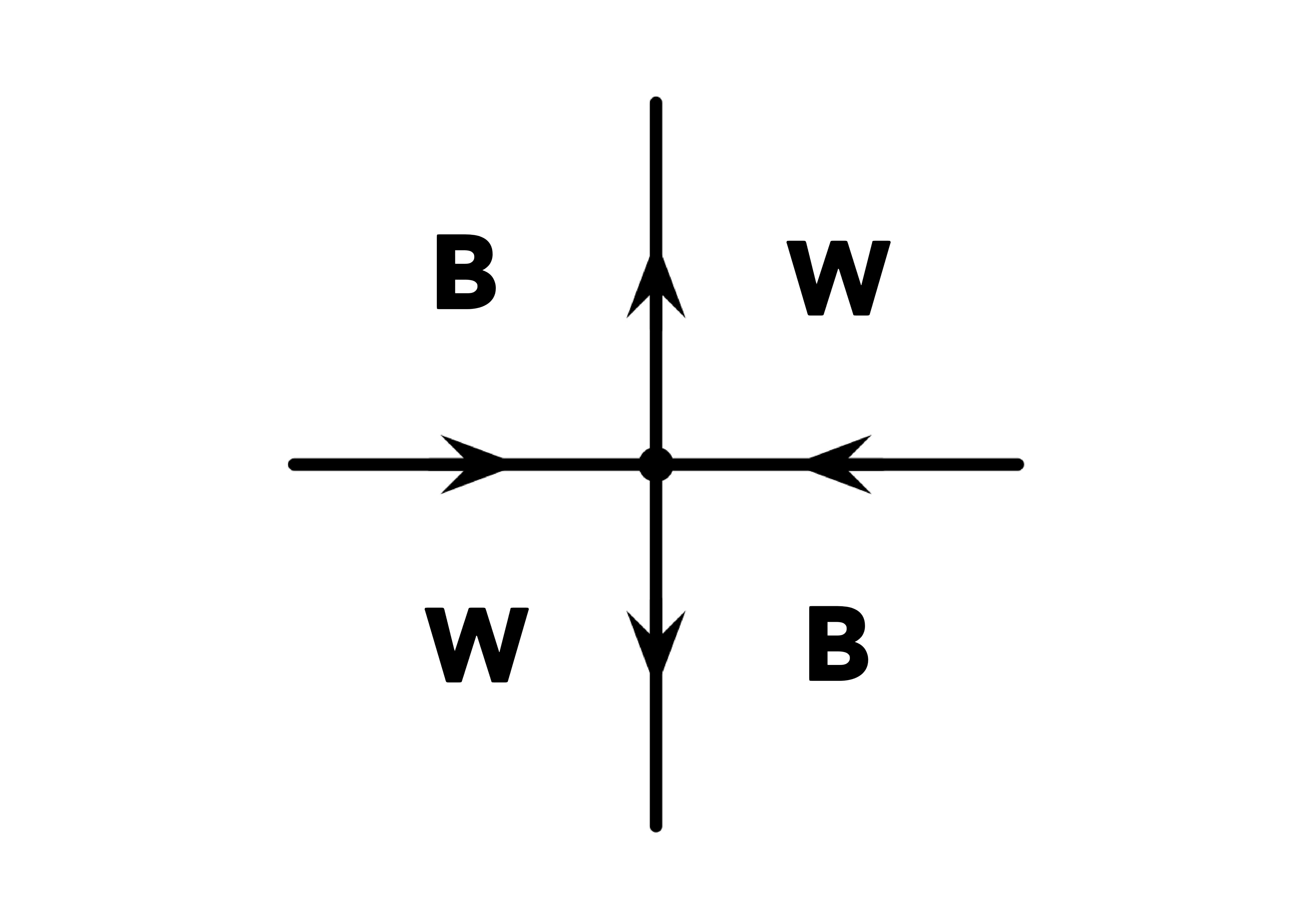}
        \subcaption*{5}
    \end{subfigure}
    \begin{subfigure}[t]{0.15\textwidth}
        \centering
        \includegraphics[width=1\textwidth]{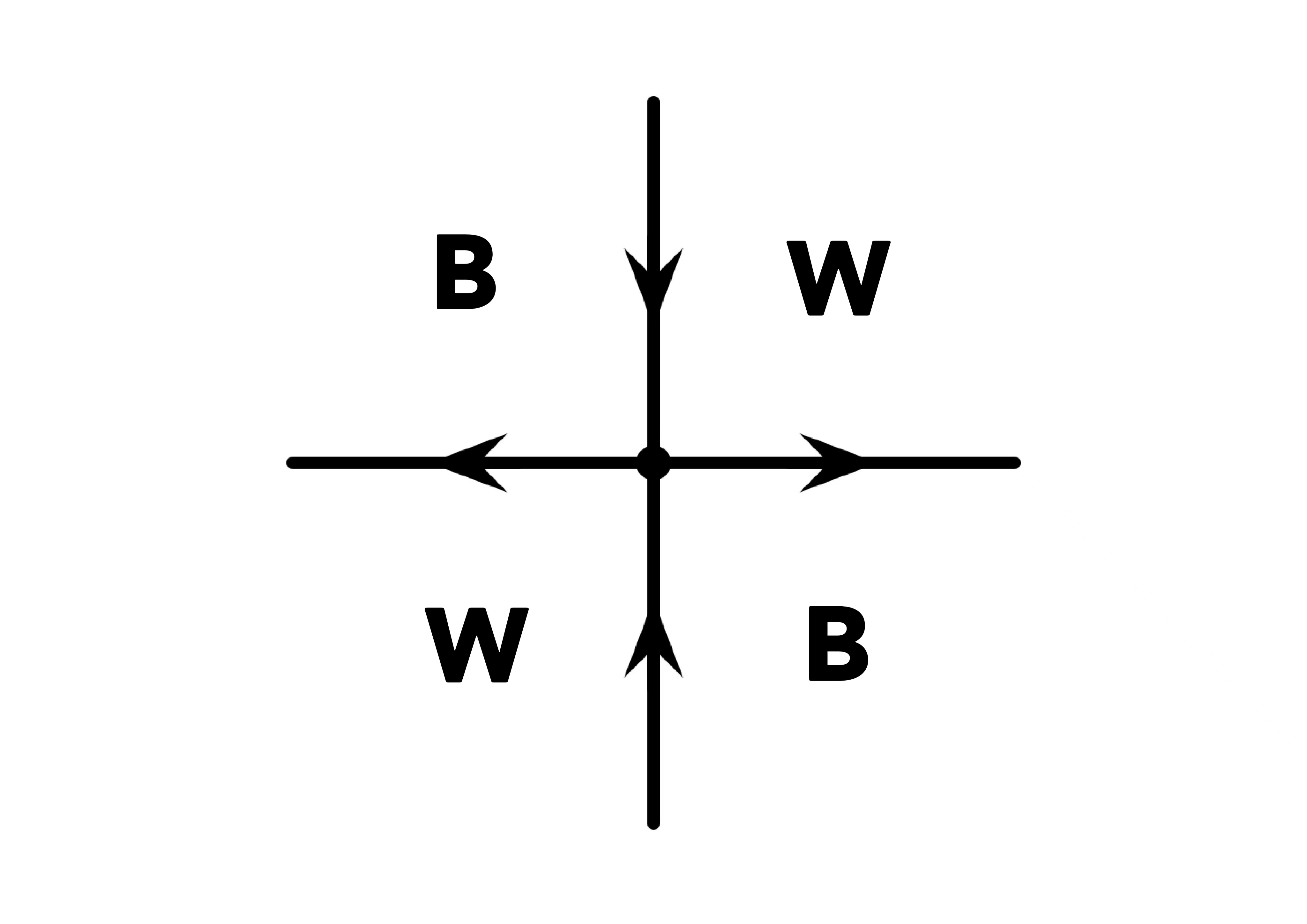}
        \subcaption*{6}
    \end{subfigure}
    \caption{Valid configurations of the planar six-vertex model}
    \label{planar six-type}
\end{figure}

\subsection{Sampling Algorithm}

Given a plane 4-regular graph $G$, a graph $H$ can be constructed as follows:
\begin{itemize}
    \item the vertices of $H$ are in 1-1 correspondence with black faces of $G$; every black face $F$ of $G$ contains exactly one vertex $v(F)$ of $H$; 
    \item the edges of $H$ are in 1-1 correspondence with vertices of $G$; given a vertex $x$ of $G$, let $F_1$ and $F_2$ be the two (possibly equal) black faces of $G$ incident to $x$, then the edge $e(x)$ of $H$ corresponding to $x$ is contained in $F_1\cap F_2\cap \{x\}$, contains $x$ and joins $v(F_1)$ and $v(F_2)$. 
\end{itemize}
We say that the graph $H$ is the graph of black face of $G$ and $G$ is the medial graph of $H$.
We denote the set of all black faces of $G$ by $\mathcal{B}$ and $H=( \mathcal{B}, E(H) )$.
Let $n=|V|$ and $k =|\mathcal{B}|$, then we have $|E|=2n$ and $|E(H)|=n$.

In a valid configuration of the planar four-vertex model, the directions of all edges on a black face form a directed circuit that can be either clockwise or counterclockwise.
Therefore, instead of assigning values of $0$ or $1$ to all edges on each black face, we can assign values of $0$ or $1$ to the black faces themselves.
If the two incident black faces of a vertex have the same direction, the vertex has weight $\beta$; otherwise, it has weight $1$.

Thus, the four-vertex model on $G$ can be reduced to a spin system on $H$.
The configurations of the spin system are $2^k$ possible assignments $\sigma :\mathcal{B}\to \{0,1\}$ of states to vertices.
The edges in $E(H)$ are labelled by the constraint function $f^\prime$ which has the constraint matrix $M(f^\prime)=
\left[ {\begin{array}{*{20}{c}}
    \beta&1\\
    1&\beta
\end{array}} \right]$.

The weight of a configuration $\sigma$ is given by
\begin{equation*}
    w(\sigma)=\prod_{(u,v)\in E(H)} f^\prime_{\sigma(u),\sigma(v)} = \prod_{(u,v)\in E(H)} \beta ^{{\bf I}(\sigma(u)=\sigma(v))}.
\end{equation*}
The partition function is given by $Z(H)=\sum_{\sigma}w(\sigma)$.
Notice that there might be multiple edges or self-loops in $H$.
The influence of self-loops to the partition function can be ignored since their constraint functions always take the value $\beta$.
We use the local constraint functions to replace the multiple edges.
In details, if there are multiple edges between $u,v\in\mathcal{B}$, we replace them with only one edge $e$ and label a local constraint function $M(f_e)=
\left[ {\begin{array}{*{20}{c}}
    \beta_e&1\\
    1&\beta_e
\end{array}} \right]$ where $\beta_e=\beta^{\#\text{multiple edges between } u,v}$ on it.
Denote the new edge set by $E^*$.
The weight of a configuration $\sigma$ can be rewritten as 
\begin{equation*}
    w(\sigma) = \prod_{(u,v)\in E^*} \beta_{(u,v)} ^{{\bf I}(\sigma(u)=\sigma(v))}.
\end{equation*}
The partition function can be rewritten as
\begin{equation}\label{equ:partition_Ising}
    Z_{4V} = \sum_{\sigma :\mathcal{B}\to \{0,1\}}\prod_{(u,v)\in E^*} \beta_{(u,v)} ^{{\bf I}(\sigma(u)=\sigma(v))}.
\end{equation}

Thus, we go back to the problem we have discussed in \cref{worm process}.
If $\beta>1$, the worm process can directly be applied to this situation, and this yields an FPAUS for the planar four-vertex model.
The planar structure of $G$ enables several optimizations.
By Euler's formula, the number of faces of $G$ is $n+2$, and there are at least two white faces in the graph.
Therefore, we have $k\le n$.
Additionally, the number of edges in $H$ is limited, as $|E^*|\le |E(H)|=n$.
Thus, we can state the following theorem:

\begin{theorem}
    For the planar four-vertex model $Z_{4v}(G;\beta,1)$ with $\beta>1$ under the canonical labeling, the mixing time of the worm process $\tau_\varepsilon(P)\le 4n^7(\frac{\ln 2\varepsilon^{-1}}{n}+\ln\frac{\beta_{\min}+1}{2(\beta_{\min}-1)})$ where $\beta_{\min}=\min_{e\in E^*}{\beta_e}$.
\end{theorem}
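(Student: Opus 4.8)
The plan is to treat this statement as a direct specialization of the worm-process analysis of \cref{worm process} to the ferromagnetic Ising instance produced by the canonical labeling. The reduction earlier in this section already rewrites the planar four-vertex model as the Ising partition function \cref{equ:partition_Ising} on the black-face graph $H^{*}=(\mathcal{B},E^{*})$ with edge weights $\beta_{(u,v)}=\beta^{\#\{\text{merged edges}\}}\ge\beta>1$. The conceptual point I would stress is that, unlike the general case, \emph{no} solution of \cref{equ: beta>1 equations} has to be computed: the canonical labeling forces every interaction to be ferromagnetic, since two black faces sharing a vertex contribute a factor $\beta>1$ exactly when they agree, so the required linear system is automatically satisfied and every hypothesis of the worm-process analysis holds verbatim on $H^{*}$. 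Self-loops are discarded (they always contribute $\beta$) and parallel edges are folded into a single edge with the appropriate $\beta_e$, so $H^{*}$ is a simple ferromagnetic Ising instance to which \cref{lem: bound worm measure} and \cref{lem: bound cogistion} apply unchanged.

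Next I would instantiate those two lemmas on $H^{*}$, writing $k=|\mathcal{B}|$ for the number of vertices and $|E^{*}|$ for the number of edges, and setting $x_e=(\beta_e-1)/(\beta_e+1)$. \cref{lem: bound cogistion} gives $\varrho(\Omega_0;\Gamma)\le k^{5}|E^{*}|$, and \cref{lem: bound worm measure} gives $\pi_{worm}(S)\ge\frac12\left(x_{\min}/2\right)^{|E^{*}|}$ for every state, whence $\ln(1/\pi(x_0))\le\ln 2+|E^{*}|\ln(2/x_{\min})$ for any fixed start state $x_0$. Feeding these into \cref{thm: cogistion and mixing time} yields
\[
\tau_\varepsilon(P)\le 4k^{5}|E^{*}|\left(\ln\frac{1}{\pi(x_0)}+\ln\frac{1}{\varepsilon}\right)\le 4k^{5}|E^{*}|\left(\ln 2+|E^{*}|\ln\frac{2}{x_{\min}}+\ln\frac{1}{\varepsilon}\right),
\]
which after expanding separates into the $\varepsilon$-term $4k^{5}|E^{*}|\ln(2/\varepsilon)$ and the geometry term $4k^{5}|E^{*}|^{2}\ln(2/x_{\min})$.

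The remaining step is to exploit planarity to replace $k$ and $|E^{*}|$ by $n=|V|$. Since $G$ is $4$-regular, $|E|=2n$, and Euler's formula gives $n+2$ faces; as the two-coloring of the bipartite dual leaves at least two white faces, the number of black faces satisfies $k\le n$, while $|E^{*}|\le|E(H)|=n$. Substituting these bounds gives $k^{5}|E^{*}|^{2}\le n^{7}$ and collects the two contributions into $4n^{7}\left(\frac{\ln 2\varepsilon^{-1}}{n}+\ln\frac{2}{x_{\min}}\right)$; finally, since $x_e$ is increasing in $\beta_e$ on $(1,\infty)$ we have $x_{\min}=(\beta_{\min}-1)/(\beta_{\min}+1)$, so $\ln(2/x_{\min})$ becomes the claimed logarithmic function of $\beta_{\min}$. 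I do not expect a deep obstacle, since the rapid mixing itself was established in \cref{worm process}; the only points genuinely needing care are \emph{verifying} that the canonical labeling yields a purely ferromagnetic instance (so that the earlier analysis transfers without solving any GF($2$) system, and so that the self-loop/parallel-edge cleanup is legitimate), and \emph{tracking} the counting bounds $k\le n$ and $|E^{*}|\le n$ so that the polynomial degree collapses exactly to $n^{7}$.
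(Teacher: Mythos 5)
Your proposal is correct and is essentially the paper's own (largely implicit) argument: the canonical labeling reduces the planar model to the ferromagnetic Ising instance \cref{equ:partition_Ising} on the black-face graph $H$ (so no GF($2$) system need be solved, and self-loops/parallel edges are absorbed exactly as you describe), after which \cref{lem: bound worm measure}, \cref{lem: bound cogistion} and \cref{thm: cogistion and mixing time} apply verbatim, and planarity gives $k\le n$ and $|E^*|\le n$, collapsing the degree to $n^7$ with the $\varepsilon$-term handled by expanding before substituting, just as the paper intends. One remark: your faithful substitution yields $\ln\frac{2}{x_{\min}}=\ln\frac{2(\beta_{\min}+1)}{\beta_{\min}-1}$, consistent with the theorem concluding \cref{worm process}, whereas the statement prints $\ln\frac{\beta_{\min}+1}{2(\beta_{\min}-1)}$; the misplaced factor of $2$ appears to be a typo in the paper (the printed bound is smaller than what the quoted lemmas deliver), not a gap in your argument.
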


\bibliographystyle{plain}
\bibliography{myRef}
\end{document}